\newtheorem{theorem}{Theorem}[section]
\newtheorem{corollary}[theorem]{Corollary}
\newtheorem{lemma}[theorem]{Lemma}
\newtheorem{claim}[theorem]{Claim}
\newtheorem{fact}[theorem]{Fact}
\newtheorem{observation}[theorem]{Observation}
\newenvironment{proofof}[1]{ {\sc Proof of #1.}\/}{\qedsymbol}
\renewcommand{\qedsymbol}{\ensuremath{\blacksquare}}
\newcommand{\comment}[1]{}
\newcommand{\remove}[1]{}
\newcommand{\suppress}[1]{}
\newcommand{\opt}{{\hbox{\sc opt}}}
\newcommand{\alg}{{\hbox{\sc alg}}}
\newcommand{\LP}{{\hbox{\sc lp}}}
\newcommand{\DP}{{\hbox{\sc dp}}}
\begin{document}

\title{An Optimal Randomized Online Algorithm for Reordering Buffer Management}

\author{
Noa Avigdor-Elgrabli\thanks{Computer Science Department,
Technion---Israel Institute of Technology, Haifa 32000, Israel.
Email: {\tt noaelg@cs.technion.ac.il}}
\and
Yuval Rabani\thanks{The Rachel and Selim Benin School of 
Computer Science and Engineering, The Hebrew University of Jerusalem,
Jerusalem 91904, Israel. Email: {\tt yrabani@cs.huji.ac.il}. Research
supported by Israel Science Foundation grant number 856-11 and
by the Israeli Center of Excellence on Algorithms.}
}

\date{\today}

\setcounter{footnote}{3}
\maketitle

\begin{abstract}
We give an $O(\log\log k)$-competitive randomized online
algorithm for reordering buffer management, where $k$ is the
buffer size. Our bound matches the lower bound of
Adamaszek et al. (STOC 2011). Our algorithm has two stages
which are executed online in parallel. The first stage computes
deterministically a feasible fractional solution to an LP relaxation 
for reordering buffer management. The second stage ``rounds" 
using randomness the fractional solution. The first stage 
is based on the online primal-dual schema, combined with a 
dual fitting argument. As multiplicative weights steps and dual
fitting steps are interleaved and in some sense conflicting, 
combining them is challenging. We also note that we apply
the primal-dual schema to a relaxation with mixed packing
and covering constraints. We pay the $O(\log\log k)$ competitive 
factor for the gap between the computed LP solution and the
optimal LP solution. The second stage
gives an online algorithm that converts the LP solution to
an integral solution, while increasing the cost by an $O(1)$ 
factor. This stage generalizes recent results that gave a
similar approximation factor for rounding the LP solution, 
albeit using an offline rounding algorithm.
\end{abstract}


\section{Introduction}

In the reordering buffer management problem (RBM) an input
sequence of colored items arrives online, and has to be rescheduled 
in a permuted output sequence of the same items, with the help 
of a buffer that can hold $k$ items. The items enter the buffer 
in their order of arrival. When the buffer is full, one color present
in the buffer must be chosen, and the items of this color in the
buffer, followed by any new items of the same color encountered 
along the way, are scheduled in the output sequence one item
per time slot, making room for new input items to enter the
buffer. The choice of color is made before future input items 
are revealed. Choosing a color
and evicting items is repeated until we reach the end of the 
input sequence and we empty the buffer. The objective is to 
minimize the total number of color changes between consecutive 
items in the output schedule. This seemingly simple model, 
introduced in~\cite{RSW02}, formalizes a wide scope of resource 
management problems in production engineering, logistics, 
computer systems, network optimization, and information retrieval 
(see, e.g.,~\cite{RSW02,BB02,KRSW04,GSV04}). Moreover, 
beyond its simplicity, elegance, and applicability, the problem
turns out to be challenging, and it captures some new and 
fundamental issues in online computing. We note that the
offline version of RBM is NP-hard~\cite{AKM10,CMSS10}, and
there is a polynomial time $O(1)$-approximation algorithm~\cite{AR13}.

This paper resolves the randomized competitive ratio of RBM.
We design a randomized online RBM algorithm and prove that
its competitive ratio is $O(\log\log k)$. This matches the
recent lower bound of $\Omega(\log\log k)$ of Adamaszek
et al.~\cite{ACER11}.
All previous online algorithms for RBM are deterministic. A
sequence of papers~\cite{RSW02,EW05,AR10,ACER11}
culminated in an 
$O(\sqrt{\log k})$-competitive algorithm~\cite{ACER11},
nearly matching the deterministic lower bound in the same
paper. Thus, our work is the first to demonstrate an
exponential gap between the deterministic and the randomized 
competitive ratio of RBM.

In essence, our algorithm is an implementation of the primal-dual 
schema, and more specifically of the multiplicate weights update
method (see~\cite{BN09} for a survey of its use in online computing,
and~\cite{AHK12} for a general survey of the method). 
We compute online a feasible solution 
to an LP relaxation
for RBM. This part is done deterministically, and it uses the same
relaxation as in our past paper~\cite{AR10}. As we compute the LP 
solution, we feed it to an online rounding algorithm, which generates 
an integral solution on-the-fly. This part uses randomness, and is
motivated by our recent paper~\cite{AR13} that gives a (deterministic)
constant factor polynomial time approximation algorithm for RBM.

One of the interesting aspects of our result is that we apply the 
multiplicative weights update method to a bipartite perfect matching-style 
linear program. Essentially all prior online results 
using this method (e.g.~\cite{AAABN09,BBN12,BBMN11,ACER12}) 
were derived through relaxations that are packing 
or covering linear programs, or small variations thereof (such as 
having additional box constraints). Another interesting aspect of 
our result is that, for reasons explained below, we cannot apply this 
method in its pure form. We must combine it with a dual fitting argument 
that is similar in spirit to that in~\cite{AR10}. Combining the two
conflicting approaches into a hybrid primal-dual algorithm and
proof is the main technical challenge of our work.

What's unique about RBM and what separates it from other reputed
online problems is the following. Usually, when an online algorithm 
makes an irrevocable decision, this may change the state of the system
in a way that changes the cost of future decisions, but it only sets
the current output. In contrast, when an RBM algorithm makes a
decision to evict a color block from the buffer, this may decide the 
output for many steps ahead, as only one item can be evicted in one
step. What's even worse, as items that are already in the buffer when
the decision is made are evicted, new items of this color that arrive
along the way can be appended to the evicted sequence, so the algorithm
is perhaps deciding now how to handle future input it hasn't yet seen.
Moreover, the size of the color block being evicted could have a dramatic
influence on the length of the sequence being evicted, and hence on
future cost: if two algorithms differ by just one item between their buffer 
contents at some point in time, and they both decide to evict a color in 
which they differ, the algorithm with one less item might evict a much 
shorter sequence than the algorithm with one more item.

Very recently, Adamaszek et al.~\cite{ACER12} proposed a 
problem they call {\em buffer scheduling for block devices},
which is a variant of RBM that differs from it in one crucial
aspect. In block devices, while evicting a color from the buffer, 
new arriving items of that color cannot be appended to the 
output sequence without incurring additional cost. For example,
if the entire input sequence consists of a single color, an RBM 
solution pays $1$, while a block device solution pays 
approximately $\frac{n}{k}$. Thus, their problem eliminates 
the issue of making decisions regarding unknown future items, 
but it still has to cope with the problem of making decisions 
regarding future output steps. 
They give an $O(\log\log k)$-competitive
randomized online algorithm for buffer scheduling for
block devices that implements the multiplicative weights
update method (see~\cite{BN09}), using a covering LP formulation.

Their result, which motivated our work and
influenced part of it, overcomes the issue of deciding on
future output steps by employing a resource augmentation 
argument, adapted from~\cite{ERW09}. In~\cite{EW05,ERW09} 
it is proved (for RBM, but the same proof applies to block
devices) that if we replace a buffer of size $k$ by a buffer of 
size $\frac{k}{4}$, then the optimal value of a solution cannot 
increase by more than an $O(\log k)$ factor. If it were possible 
to reduce this factor to $O(1)$ for some constant factor decrease
in the buffer size, we could derive our results with ease using 
a fairly simple implementation of multiplicative weights. 
Unfortunately, the $O(\log k)$ factor is tight~\cite{Abo08}.
What is used in~\cite{ACER12} is a straightforward generalization 
of the proof in~\cite{ERW09}, which shows (for block devices,
but it is easy to generalize the proof also for RBM) that replacing 
a size $k$ buffer by a size $k'$ buffer, for 
$k' = \left(1 - \frac{O(1)}{\ln k}\right)\cdot k$, increases the
optimal cost by a constant factor. 

Intuitively, resource augmentation
equips the online algorithm with some lookahead of $k-k'$ items,
and this lookahead allows the algorithm to decide on up to $k-k'$ 
steps into the future. Technically, this helps implement the online
primal-dual schema in the two crucial parts of the argument.
Firstly, the gap enables an initialization of the primal variables
(triggered by the corresponding dual constraint becoming tight)
to roughly $\frac{k-k'}{k}$ instead of $\frac 1 k$, and this reduces
the competitive ratio that one can hope for from $\log k$ to 
$\log(k/(k-k'))$.
Setting the multiplicative weights update method to prove this claim
is the main contribution of~\cite{ACER12} to extending the method
to handle RBM-style problems.
Secondly, lookahead helps bound the rate of
growth of the primal cost. In order to relate it to the rate of growth
of the dual cost, we'd like to remove items (fractionally) from the
buffer at a rate that is roughly the fractional volume that we've
already removed from items that are still present with some
weight in the buffer. This is easy if we only take into account
the removed volume that already disappeared from the buffer.
However, past decisions and current decisions extend into the
future, and there is some volume that has been scheduled to be
removed, but its removal hasn't yet happened. In particular,
some items can no longer ``participate in the game" despite
being still in the buffer, because they've already been scheduled
to be removed entirely from the buffer (this reduces the growth
rate of the dual cost). At least in the
case of block devices, if there are never more than $O(k-k')$
items of a single color in the buffer, then this scheduled but
not removed volume never exceeds $O(k-k')$, and this bounds
the growth of the primal cost adequately. However, even in the
case of block devices, the buffer may contain very large color
blocks. In~\cite{ACER12} they overcome this problem by generating
an infeasible primal solution. Thanks to the fact that in the
block devices setting the future schedule does not include any
future items, they are still able to round the infeasible solution
to get a feasible integral solution.

This approach does not work in the case of RBM. It can be made
to work if the buffer never contains a color block larger than
$O(k-k')$, as in this case a feasible primal solution can be
generated.
On the other hand, if {\em every} color block in the buffer
always has (at decision points) at least $k-k'$ items, then
(a simple version of)
the deterministic dual fitting algorithm of~\cite{AR10} (and
probably also earlier algorithms) guarantees a competitive
ratio of $O(\log(k/(k-k')))$. Thus, the main challenge and
the main technical contribution of this paper, is to combine
the two methods to give an $O(\log(k/(k-k')))$ competitive
ratio without restrictions on the instance. As at any given
time the buffer can be in a ``mixed" state with both small
and large color blocks, combining these methods is non-trivial.
A covering formulation similar to the one used in~\cite{ACER12}
cannot be used, to the best of our knowledge, so we are
required to deal with a non-covering formulation that is
harder to incorporate into the primal-dual schema. Adding 
to the challenge is the fact that the dual fitting argument
inherently generates an integral solution, whereas the 
primal-dual schema inherently
generates a fractional solution. Thus, we have to decide if
a color block will turn out to be small or large before we know
how many items of this color we can accumulate in the buffer.
If we start removing it fractionally using the primal-dual schema, 
we cannot regret this decision later and switch to dual fitting.
Rounding also poses its own challenge, mainly due to the
above-mentioned feature of RBM, whereby premature eviction
of a color may cost us a great deal later. (This is something
that does not happen in the block devices problem.)

The rest of the paper is organized as follows. After
introducing some notation, definitions, and an
overview in Section~\ref{sec: preliminaries}, we present
our online primal-dual algorithm in Section~\ref{sec: online lp} 
and analyze it in Section~\ref{sec: frac analysis}. Finally,
we present the online rounding algorithm and analyze
it in Section~\ref{sec: online rounding}. The explicit 
constants in the rest of the paper are somewhat
arbitrary. We made no attempt to optimize them.

\section{Preliminaries}\label{sec: preliminaries}

Let ${\cal I}$ be a sequence of colored items. We denote
the color of an item $i$ by $c(i)$. Abusing notation, we
denote the color of a sequence $I$ of items of the same
color by $c(I)$. We denote
by $\opt_k({\cal I})$ the cost of an optimal (offline) RBM 
schedule of ${\cal I}$ using a buffer of size $k$. The
following lemma is adapted from~\cite{ERW09,ACER12}.
For completeness, we include a proof in the appendix.
\begin{lemma}\label{lm: resource augmentation}
For every input sequence ${\cal I}$ and for every $k'<k$,
$\opt_{k'}({\cal I})\le\frac{2k+(k-k')\ln k'}{k'}\cdot \opt_k({\cal I})$.
\end{lemma}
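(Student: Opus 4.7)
\begin{proofsketch}
My plan is to transform an optimal offline schedule $\sigma^*$ for buffer size $k$ into a valid schedule $\sigma$ for buffer size $k'$ and bound its cost. Writing $\sigma^*$'s output as a sequence of color blocks $(c_1, n_1), \ldots, (c_T, n_T)$ with $T = \opt_k(\mathcal{I})$, the schedule $\sigma$ will output the same items but may refine each of $\sigma^*$'s blocks into several sub-blocks of the same color, interleaved with ``forced'' blocks of other colors; its cost is just the total number of color blocks it produces.

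I would construct $\sigma$ as a lazy simulation of $\sigma^*$. Both schedules read input items in the same order. Whenever $\sigma^*$ schedules an item that $\sigma$ still holds, $\sigma$ schedules it in the same relative position. When $\sigma^*$'s buffer has room to read in a new input item but $\sigma$'s does not, $\sigma$ is forced to evict some color early; a natural rule is to evict the color whose next use by $\sigma^*$ is farthest in the future, which maintains the invariant that $\sigma$'s buffer is a subset of $\sigma^*$'s buffer at all times. With this invariant, the set $B^* \setminus B$ of items in $\sigma^*$'s buffer but not $\sigma$'s has size at most $k - k'$ throughout.

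The heart of the proof is bounding the extra color changes caused by forced evictions. I would charge them to the OPT blocks via a bucketing/harmonic argument: for each $i = 1, \ldots, k'$, bound the number of forced evictions during a given OPT block that remove exactly $i$ items of some color, using the $k-k'$ bound on the excess to conclude that the contribution of bucket $i$ is at most roughly $(k-k')/i$. Summing $\sum_{i=1}^{k'} (k-k')/i \approx (k-k')\ln k'$ and dividing by $k'$ yields the $\frac{(k-k')\ln k'}{k'}$ main term, while the additive $\frac{2k}{k'}$ absorbs the overhead from the initial fill of the buffer and the final drain after the last input item arrives.

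The main obstacle will be formalizing this per-block charging, because a forced eviction attributed to the current OPT block can also affect later blocks: if the ``wrong'' color is evicted now, its future items may later have to be output in a new sub-block of their own. One has to set up a potential function tracking $|B^* \setminus B|$ together with a measure of ``committed'' future splits, and show that each forced eviction of size $i$ reduces the potential by an amount commensurate with its contribution to the harmonic sum. A secondary difficulty is the choice of eviction rule: the ``farthest-future-use'' rule guarantees the subset invariant, but one must verify that it is compatible with the bucketing argument, and in particular that the boundary cases (start and end of the input) do not ruin the $O(k/k')$ additive overhead.
\end{proofsketch}
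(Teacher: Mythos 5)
Your proposal correctly identifies the overall shape of the argument — simulate $\opt_k$ with a smaller buffer, maintain a relationship between the two buffers, and extract the $\ln k'$ via a harmonic sum — but the two load-bearing choices you make are different from the paper's and, as stated, leave a real gap.

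First, the eviction rule. You propose ``farthest-future-use,'' which considers only when $\opt_k$ next touches a color and ignores how many items of that color are in the small buffer. This can be very bad: the farthest-future color may have a single item, so evicting it frees only one slot and you are immediately forced again. The paper instead numbers the colors $1,2,\dots$ by the order in which $\opt_k$ evicts them, lets $c_f$ be the smallest color present, and evicts the color maximizing the potential $\phi(c) = (c - c_f + 1)\,n(c)$, where $n(c)$ is the number of buffered items of color $c$. This is precisely a trade-off between ``close in $\opt$'s order'' and ``large,'' and it is what makes the harmonic bound fall out: since $\sum_c n(c) = k'$ and each $\phi(c)$ is small, a pigeonhole over $c - c_f + 1 = 1,\dots,\lfloor s\rfloor$ shows the maximum potential is at least $k'/(1 + \ln k')$. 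Second, the accounting. Your bucketing claim — that among forced evictions in one $\opt$ block, at most $(k-k')/i$ evict exactly $i$ items — is asserted without justification and its arithmetic does not obviously produce the target bound: summing $(k-k')/i$ gives $(k-k')\ln k'$ extra evictions \emph{per block}, which is a factor $k'$ too large, and the ``divide by $k'$'' step is not explained. The paper's counting is global, not per-block: it keeps a counter $p(i)$ for each color $i$, increments $p(i)$ by $n(c)$ whenever a color $c > i$ is evicted while $i < c_f$-threshold still applies, shows $p(i) < k-k'$ (because being $k-k'$ output slots ``ahead'' of $\opt_k$ forces the current $c_f$'s last occurrence to have arrived), and then divides the total budget $\sum_i p(i) < (k-k')\opt_k$ by the per-eviction gain $k'/(1+\ln k')$. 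Your sketch is missing both the balanced potential that makes a good eviction possible and the counter-based budget argument; without them the harmonic sum has nothing to attach to.
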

In our algorithm and analysis we use this lemma with
$k' = k - \frac{2k}{\ln k}$, which increases the optimal
cost by a constant factor.

Consider a sequence $I$ of items of a single color $c$ in 
${\cal I}$ that includes all the items of this color between
the first and last item of $I$. If there is an RBM solution
that outputs $I$ starting at time $j$, we call the pair $(I,j)$
a {\em batch}. Thus, an RBM solution consists of scheduling
or packing batches in the interval of output time slots
$\{k+1,k+2,\dots,k+n\}$, where every output slot is used by at most
one batch, and every input item is scheduled or covered
by at least one batch. In other words, an RBM solution
is a bipartite matching of input items to output slots. The
matching must observe the order of input on each color
separately, and an item cannot be matched to an output
slot that precedes its arrival. The cost is the number of batches, where
a batch is a maximal output interval that got matched to
a set of items of the same color. This discussion leads us
to a natural linear programming relaxation for RBM. We
can think of the output slots as a channel of width $1$
spanning the output interval $\{k+1,k+2,\dots,k+n\}$.
We pack batches fractionally in this channel, without
violating the width constraint, but covering all input items 
at least once. This relaxation is essentially identical to the
one used in~\cite{AR10,AR13}. We denote it simply by
$\LP_k$. Formally, $\LP_k$ is
\begin{eqnarray}
\nonumber
\hbox{minimize\ } \displaystyle{\sum_{(I,j)} x_{I,j}} \hbox{   subject to} & &\\
\label{eq: input constraints}
\displaystyle{\sum_{(I,j):\ i\in I}x_{I,j} \ge 1} & \forall i=1,2,\ldots,n & \\
\label{eq: output constraints}
\displaystyle{\sum_{(I,j'): j'\le j <j'+|I|}x_{I,j'} \le 1} & \forall j=k+1,\ldots , k+n & \\
\nonumber
x\ge 0. & &
\end{eqnarray}
Here, $x_{I,j}$ is the weight of the batch $(I,j)$ in the packing.
Constraints~\eqref{eq: input constraints} require that every item
is eventually removed from the buffer (in batches of total weight
$1$). Constraints~\eqref{eq: output constraints} restrict the output
to remove a total weight of at most $1$ in each time slot.
The dual linear program, which we denote by $\DP_k$ is
\begin{eqnarray}
\nonumber
\hbox{maximize} \displaystyle{\sum_{i=1}^{n} y_i - \sum_{j=k+1}^{k+n}z_j} \hbox{   subject to} & & \\
\label{eq: dual constraints}
\displaystyle{\sum_{i\in I}y_{i}-\sum_{j'=j}^{j+|I|-1}z_{j'} \le 1} & \forall (I,j) & \\
\nonumber
y,z\ge 0. & &
\end{eqnarray}

Our algorithm computes online an $\LP_k$ feasible solution
$x$ and a $\DP_{k'}$ feasible solution $(y,z)$. The algorithm
feeds $x$, as it is being produced, to an online ``rounding"
procedure that produces an $\LP_k$ feasible integer solution
$\bar{x}$, which is the output of our online algorithm. Our
main result, which the rest of the paper builds towards, is
\begin{theorem}
There is an $O(\log\log k)$-competitive randomized online
algorithm for RBM.
\end{theorem}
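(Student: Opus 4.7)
\begin{proofsketch}
The plan is to combine three ingredients to bound the expected cost of the output against $\opt_k({\cal I})$. First, invoke Lemma~\ref{lm: resource augmentation} with $k' = k - \frac{2k}{\ln k}$, as the paper announces, so that $\opt_{k'}({\cal I}) = O(1)\cdot\opt_k({\cal I})$ and, crucially, $\log\!\bigl(k/(k-k')\bigr) = \log(\ln k /2) = O(\log\log k)$. This choice of $k'$ is what converts the generic $\log\!\bigl(k/(k-k')\bigr)$ bound produced by the online primal-dual machinery into the desired $\log\log k$ competitive ratio.

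Second, the algorithm of Section~\ref{sec: online lp} maintains in parallel a primal $\LP_k$-feasible vector $x$ and a $\DP_{k'}$-feasible pair $(y,z)$, with the primal governed by the large buffer $k$ but the dual by the smaller buffer $k'$. The goal of the analysis in Section~\ref{sec: frac analysis} is to show, by tracking a suitable potential, that the growth rate of $\sum_{(I,j)} x_{I,j}$ at every step is at most $O(\log(k/(k-k')))$ times the growth rate of the dual objective $\sum_i y_i - \sum_j z_j$. Color blocks of ``small'' size will be handled by multiplicative weights on the primal: when a constraint~\eqref{eq: input constraints} pushes its dual constraint~\eqref{eq: dual constraints} towards tightness, the corresponding $x_{I,j}$ is initialized to roughly $(k-k')/k$ rather than $1/k$, and subsequent multiplicative updates incur the claimed $\log(k/(k-k'))$ blow-up. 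Blocks that grow ``large'' are instead handled by a dual-fitting assignment in the style of~\cite{AR10}, which directly pays $O(\log(k/(k-k')))$ per block event against $(y,z)$. The main obstacle, as emphasized in the introduction, is that a single pair $(y,z)$ must certify both contributions simultaneously: the regime choice for a block must be committed online before we know how many items of its color will accumulate, and once a block has been partially removed fractionally we cannot retroactively re-interpret it as dual-fitted. My plan is to track, per color, the volume already scheduled for removal but not yet delivered, and to switch regimes precisely when this volume crosses a threshold of order $k-k'$, with the one-shot switch cost charged to the same dual variables via a unified potential.

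Third, feed $x$ online to the randomized rounding of Section~\ref{sec: online rounding}, which produces an $\LP_k$-feasible integral $\bar{x}$ with $\EE\!\bigl[\sum_{(I,j)}\bar{x}_{I,j}\bigr] \le O(1)\cdot \sum_{(I,j)} x_{I,j}$. This extends the offline rounding of~\cite{AR13} to the online setting; the technical subtlety is the RBM-specific phenomenon that premature eviction of a color block can inflate the integral cost arbitrarily if new items of that color arrive later, so the random decisions must be coupled with the fractional flow so as never to evict a block the algorithm has not already paid for fractionally.

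Chaining the three ingredients via weak duality yields, in expectation,
\[
\EE[\ALG({\cal I})] \;\le\; O(1)\sum_{(I,j)}x_{I,j} \;\le\; O(\log\log k)\Bigl(\sum_i y_i - \sum_j z_j\Bigr) \;\le\; O(\log\log k)\,\opt_{k'}({\cal I}) \;\le\; O(\log\log k)\,\opt_k({\cal I}),
\]
which is the theorem. The hardest step, and the one that occupies most of the paper, is the second: constructing and analysing the hybrid multiplicative-weights/dual-fitting pair on a mixed packing-covering LP where the regime decision is irrevocable and online.
\end{proofsketch}
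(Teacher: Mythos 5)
Your proposal is correct and matches the paper's own proof exactly: it chains Lemma~\ref{lm: resource augmentation} with $k' = k - 2k/\ln k$, the primal--dual gap of Theorem~\ref{thm: main}, weak duality for $\DP_{k'}$, and the $O(1)$ online rounding guarantee of Lemma~\ref{lm: rounding main}. The only small imprecision is bookkeeping: in the paper the $O(\log\log k)$ factor arises as the scaling needed to make the infeasible pair $(\hat{y}+\bar{y},\hat{z})$ into a feasible $(y,z)$ (Lemma~\ref{lm: (y,z) feasible}), while the primal cost is charged at rate $O(1)$ against the \emph{unscaled} dual (Lemma~\ref{lm: primal cost}), rather than the primal growing at rate $O(\log\log k)$ relative to the feasible dual as you phrase it; the product is the same and this does not affect the top-level argument.
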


\begin{proof}
Theorem~\ref{thm: main} establishes that the value of $x$
is at most $O(\log\log k)$ times the value of $(y,z)$, which
is a lower bound on $\opt_{k'}$, and hence at most
$O(\opt_k)$ (by Lemma~\ref{lm: resource augmentation}). 
Lemma~\ref{lm: rounding main} establishes that
the value of $\bar{x}$ is at most $O(1)$ times the value
of $x$, and this concludes the proof of the theorem.
\end{proof}

\section{The Online LP Solution}\label{sec: online lp}

In this section we give an online algorithm that constructs
a primal feasible solution $x$ to $\LP_k$ and a dual feasible
solution $(y,z)$ to $\DP_{k'}$. In Section~\ref{sec: frac analysis}
we prove the following theorem.
\begin{theorem}\label{thm: main}
$$
\sum_{(I,j)} x_{I,j}\le O(\log\log k)\cdot
\left(\sum_{i=1}^{n} y_i - \sum_{j=k'+1}^{k'+n}z_j\right).
$$
\end{theorem}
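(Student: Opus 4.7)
The plan is to amortize the primal growth against the dual growth at every event (an item arrival or an output time step), showing that each unit of increase in $\sum_{(I,j)} x_{I,j}$ is charged to at most $O(\log\log k)$ units of increase in $\sum_i y_i - \sum_{j=k'+1}^{k'+n} z_j$. I would first classify the color blocks sitting in the buffer at each decision point by volume: a block is \emph{small} if it contains fewer than $k-k'$ items and \emph{large} otherwise. With $k' = k - \tfrac{2k}{\ln k}$, the gap $k-k'$ is $\Theta(k/\ln k)$, and $\log(k/(k-k')) = \Theta(\log\log k)$, which is exactly the target competitive factor coming out of Lemma~\ref{lm: resource augmentation}.

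For a small block, I would run a multiplicative weights primal-dual update. Whenever a batch $(I,j)$ is fractionally scheduled, $x_{I,j}$ grows by a small amount; simultaneously the dual variables $y_i$ for $i \in I$ still in the buffer are raised multiplicatively, and the output-slot variables $z_{j'}$ additively at the rate at which the output channel is consumed. The key idea for obtaining $O(\log\log k)$ rather than $O(\log k)$ is to seed $x_{I,j}$ at $\Theta((k-k')/k) = \Theta(1/\ln k)$ when its dual constraint first becomes tight, instead of the naive $\Theta(1/k)$; then the number of multiplicative doublings before $x_{I,j}$ reaches $1$ is at most $\log_2(k/(k-k')) = O(\log\log k)$, so each block in this regime contributes $O(\log\log k)$ dual-to-primal ratio.

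For a large block I would use a dual fitting update in the spirit of our earlier paper: directly raise the $y_i$'s of its items and the associated $z_{j'}$'s by amounts that pay $O(1)$ per block, exploiting the fact that a block of size at least $k-k'$ already dominates the relevant window of output constraints, so the dual objective increases proportionally to the primal cost with no $\log\log k$ blowup. The main obstacle, and the heart of the argument, is the interaction between these two conflicting regimes: a color block may start small and become large as new items arrive, and fractional mass already committed by the primal-dual track to a batch cannot be retracted. To handle this, I would maintain as an invariant that the total fractional primal mass produced during any small-phase of a color is $O(1)$, so that when the block transitions into a large-phase any already-committed fractional mass can be folded into the constant-factor charge of the large-phase. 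On the dual side, the $y_i$'s built up during the small-phase are monotone nondecreasing and serve as a head start toward satisfying the large-phase dual constraint, so no dual work is wasted and the two updates compose without interfering.

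Finally I would sum the per-event inequalities over the whole online execution. The primal increments telescope to $\sum_{(I,j)} x_{I,j}$; the positive dual increments telescope to $\sum_i y_i$; and the output-slot charges collect to exactly $\sum_{j=k'+1}^{k'+n} z_j$ on the negative side of the dual objective, giving the bound claimed by the theorem. Primal feasibility of $x$ for $\LP_k$ and dual feasibility of $(y,z)$ for $\DP_{k'}$ would be maintained as explicit invariants of the update rules: feasibility of the input constraints~\eqref{eq: input constraints} is enforced by the stopping rule $x_{I,j}=1$, feasibility of the output constraints~\eqref{eq: output constraints} is enforced by the rate at which batches consume the channel, and the dual constraints~\eqref{eq: dual constraints} are exactly what triggers primal initialization. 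Checking these invariants is routine but required to close out the argument.
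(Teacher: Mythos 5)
Your high-level architecture matches the paper's: a hybrid of multiplicative-weights updates for small blocks and dual fitting for large blocks, seeding at $\Theta(1/\ln k)$ so the doubling count is $O(\log\log k)$, with the threshold set by the resource-augmentation gap $k-k'$. But the central technical step is missing, and the invariant you propose to replace it is false in general. You write that ``the total fractional primal mass produced during any small-phase of a color is $O(1)$, so that when the block transitions into a large-phase any already-committed fractional mass can be folded into the constant-factor charge of the large-phase.'' This is exactly the irrevocability problem the argument has to overcome, not an invariant one can simply assert: once you start removing a color fractionally, you cannot retract that mass, and a single color can alternate between small and large regimes many times over the course of a run. The dual constraint for a batch $(I,j)$ can span several alternating fractional and integral stretches, each contributing separately to $\sum_{i\in I} y_i - \sum z_{j'}$, and without a bound on how many of those stretches contribute positively you get no bound at all.

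What actually closes this gap in the paper is a combinatorial structure your proposal never builds. Items carry one of three states (fractional, integral, frozen); the frozen state is what lets the algorithm defer the small-versus-large decision without committing fractional mass prematurely, while Cases~3--6 control how and when frozen items are released. Dual feasibility is then proved by decomposing each batch $(I,j)$ into its maximal fractional and integral segments and showing that the offsets $\Delta_p$ between when segment $I_p$ is scheduled by $(I,j)$ and when the algorithm actually evicted it satisfy $\Delta_p \ge \sum_{r>p}\ell_r$ (Claim~\ref{cl: delta_p}), which forces the $\ell_p$'s with positive net contribution to grow geometrically. Combined with $\frac{k}{100\ln k}\le\ell_p\le k$, this caps the number of contributing integral segments at $O(\log\log k)$; the fractional segments are each bounded by the $\hat\sigma/\hat x$ exponential bookkeeping and Claim~\ref{cl: x-hat bounded}. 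The primal-cost side (Lemma~\ref{lm: primal cost}) also needs a dedicated argument relating $d\hat x_{I_c,j_c}$ to the already-scheduled volume $\sum x_{I,j}|B_c\cap I|$ via the regular resets of $\hat\sigma$, not just a per-event charging. So while your outline captures the motivation and the two ingredients, the argument that they compose --- the frozen state, the segment decomposition, and the $\Delta_p$ telescoping --- is the heart of the theorem and is absent from your proposal.
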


We construct simultaneously a feasible primal solution
$x$, an infeasible dual solution $(\hat{y},\hat{z})$,
and an auxiliary dual penalty $\bar{y}$. The construction
of $(\hat{y},\hat{z})$ uses a non-trivial implementation of 
the multiplicative weights update method, and $\bar{y}$ is
generated by a dual fitting argument.
A feasible dual solution $(y,z)$ can be derived by scaling
down $(\hat{y}+\bar{y},\hat{z})$ by a factor of $O(\log\log k)$.
The algorithm maintains throughout its execution for
every color $c$ an index $s_c$ which is the earliest item
of color $c$ whose primal constraint~\eqref{eq: input constraints}
is violated, i.e., $\sum_{(I,j):\ s_c\in I} x_{I,j} < 1$.

Notice that if a color $c$ is not present in the buffer (for
instance, $c$ has not been encountered yet), the algorithm
may not know $s_c$. However, if the buffer has no item of
color $c$, then the algorithm does not use $s_c$, so this
does not cause a problem. The algorithm further maintains
the  earliest output slot $t$ whose primal
constraint~\eqref{eq: output constraints} is not tight, i.e.,
$\sum_{(I,j):\ j\le t < j+|I|} x_{I,j} < 1$.
Initially, $t$ is set to $k+1$.

The dual solution $(\hat{y},\hat{z})$ is generated as
follows. Initially, all dual variables are set to $0$. The
solution is parametrized by $\mu$, which is raised at
a uniform rate. We occasionally refer to $\mu$ as
{\em time}, but this should not be confused with the
discrete input and output time steps. Further notice
that even though for convenience we describe the
algorithm as a continuous process, it can be discretized
easily, and it can be implemented 
efficiently (regardless, competitive analysis is not concerned
with computational efficiency). 
The algorithm raises all the variables
$\hat{y}_i$ with $i\ge s_{c(i)}$ and all the variables
$\hat{z}_j$ for $j\ge t$ at the same rate $d\mu$.
(This raises also future $\hat{y}_i$-s and $\hat{z}_j$-s; when 
we reach them, we will initialize their value to what's determined
by this process.)
Notice that we raise the $\hat{y}_i$-s corresponding to
violated primal constraints~\eqref{eq: input constraints},
and the $\hat{z}_j$-s corresponding to primal
constraints~\eqref{eq: output constraints} that are not tight.
Raising dual variables causes $x$ to change, thus removing items
fractionally or integrally from the algorithm's buffer. This eventually
increments the $s_c$-s and $t$, thus changing the set of dual variables that
are raised. It also affects $\bar{y}$. The process ends when $t$
passes past time $k'+n$. At this point, we simply evict the remaining
buffer contents, using the output slots up to time $k+n$.
(Notice that this last step does not cost more than the total
number of colors plus one; the total number of colors is a lower 
bound on the optimal cost.)

We now explain how raising $(\hat{y},\hat{z})$ affects
$x$. At any given time, let $B$ denote the set of
items encountered so far, whose primal constraints are
violated, and let $B_c$ denote the set of items of color
$c$ in $B$. In other words, $B_c$ includes all the color
$c$ items that appear in the input sequence starting from
$s_c$ and before the current slot $t$.
Notice that for every item in $B$, at least a
fraction of that item is still in the algorithm's buffer.
There may be additional items in the algorithm's buffer.
These are items that are already scheduled to be removed
entirely from the buffer, but $t$ hasn't yet passed
the point where they disappear from the buffer. The items
in $B$ are endowed with one of three states: {\em fractional},
{\em integral}, or {\em frozen}. If any item in $B_c$ is integral,
then they all are. Otherwise, the first ones are fractional and the 
remaining ones (if any) are frozen. We
will refer to a set of items in $B$ of the same color and the same
state as a {\em block}. Thus, $B_c$ consists of either one or two
blocks: an {\em active} block $B_c^{act}$ that is either fractional
or integral, and a frozen block $B_c^{frz}$ that might be empty
(and must be empty if $B_c^{act}$ is integral). With a slight abuse
of terminology, we sometimes also refer to all of $B_c$ as a block.
These sets ($B$, $B_c$, $B_c^{act}$, $B_c^{frz}$) are all functions 
of $\mu$ (and so are $s_c$, $t$, $x$, $\hat{y}$, $\hat{z}$, $\bar{y}$, 
and other variables defined below).

Consider a dual constraint indexed $(I,j)$ and put $c = c(I)$. Let
$$
\sigma_{I,j} = \sum_{i\in I} \hat{y}_{i}-\sum_{j'=j}^{j+|I|-1} \hat{z}_{j'}
$$
denote the current {\em dual cost} of the batch $(I,j)$. Notice that we 
know this value at any time $\mu$, even if the batch is matched to 
output slots we haven't yet reached.
\begin{fact}
Consider a batch $(I,j)$ of color $c$. If
$\frac{d\sigma_{I,j}}{d\mu} > 0$ then there must be an item
$i\in B_c\cap I$ that is matched by $(I,j)$ to an output
slot before the current time $t$.
\end{fact}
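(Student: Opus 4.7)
The plan is to differentiate $\sigma_{I,j}$ directly from the update rule and then exploit the fact that within a batch, items are matched to output slots in input order. By the algorithm's construction, $\hat{y}_i$ grows at rate $1$ exactly when $i \ge s_c$ and $\hat{z}_{j'}$ grows at rate $1$ exactly when $j' \ge t$, so
$$\frac{d\sigma_{I,j}}{d\mu} \;=\; \bigl|\{i \in I : i \ge s_c\}\bigr| \;-\; \bigl|\{j' \in [j, j+|I|-1] : j' \ge t\}\bigr|.$$
Abbreviate these counts by $a$ and $b$; the hypothesis becomes $a > b$, which in particular forces $a \ge 1$.

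Next, I use monotonicity. Since $I$ is listed in increasing input order and the batch sends its $p$-th item to slot $j+p-1$, the $a$ items of $I$ with index $\ge s_c$ form a suffix of $I$ matched to the last $a$ slots of $[j, j+|I|-1]$, while the $b$ slots with index $\ge t$ are the last $b$ of the same interval. Let $i_\ell$ denote the earliest item of $I$ with $i_\ell \ge s_c$; its position is $\ell = |I| - a + 1$, so it is matched to slot $j + |I| - a$. When $b \ge 1$ one has $b = j + |I| - t$, whence $j + |I| - a < j + |I| - b = t$; when $b = 0$ one has $t > j + |I| - 1 \ge j + |I| - a$. Either way, the slot assigned to $i_\ell$ lies strictly before $t$.

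It remains to verify $i_\ell \in B_c$. By construction $i_\ell \ge s_c$, and because $(I,j)$ is a valid batch the input index of $i_\ell$ does not exceed its matched output slot, so $i_\ell \le j + |I| - a < t$. Invoking the characterization of $B_c$ stated right before the Fact, namely that $B_c$ consists of the color-$c$ items with input indices in $[s_c, t)$, we conclude $i_\ell \in B_c \cap I$, which is exactly the conclusion of the Fact. The whole argument is pure bookkeeping; the only point that deserves a second look is the edge case $b = 0$ in the slot comparison, which is handled above by observing that it forces $t > j + |I| - 1$.
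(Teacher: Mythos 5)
Your argument is correct and is essentially the paper's one-sentence proof run forward rather than by contraposition: the paper observes that if every item of $B_c\cap I$ were matched at slot $t$ or later, each $\hat{y}$-increase would be offset by a $\hat{z}$-increase so $\sigma_{I,j}$ could not grow, and your $a>b$ bookkeeping with the suffix/prefix structure and the batch-validity inequality is exactly this made explicit. One small imprecision worth flagging: your identity $b=j+|I|-t$ presumes $t\ge j$; when $t<j$ it overcounts (then $b=|I|$), but in that case $b=|I|\ge a$ contradicts the hypothesis $a>b$, so the case is excluded and the argument goes through unchanged.
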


\begin{proof}
If all the items in $B_c\cap I$ are matched by $(I,j)$ at time $t$
or later, then for all $j'\in [j,j+|I|-1]$, we have that $j'\ge t$,
so $\hat{z}_{j'}$ increases, and therefore $\sigma_{I,j}$ cannot 
increase.
\end{proof}

The algorithm produces a primal solution $x$ by scheduling
batches of items, i.e., by raising $x_{J,t}$, for some batches
$(J,t)$, where $t = t(\mu)$. If we schedule a batch $(J,t)$ of 
color $c$, then $J$
begins with the items in $B_c^{act}$ (at the time $\mu$ when
$(J,t)$ is scheduled) and $J$ could extend beyond
$B_c^{act}$. We append a new item $i$ of color $c$ to $J$ if
and when the following becomes true: $i$'s state is the same as 
the state of the previous 
items in $J$ (when they were added to $J$), and we did not
pass beyond the end of the current schedule of $J$. 
In particular, when an item extends $J$ it 
is in $B_c^{act}$. Specifically, we do not append $i$
to $J$, even though it is in our buffer by the time we reach
the end of the current schedule of $J$, if $i\in B_c^{frz}$
at that time.
To summarize, scheduling a batch of color $c$ at time $\mu$
involves packing in the output stream, starting with output slot 
$t = t(\mu)$, 
the sequence of items in $B_c^{act} = B_c^{act}(\mu)$ (with a weight
that cannot be greater than the remaining unscheduled weight of the 
first item in $B_c^{act}$), and later possibly extending this sequence 
with new items on-the-fly.

The regular execution of the algorithm is to schedule
continuously batches for every color $c$ for which $B_c^{act}$
is fractional. The rate $dx_{J,t}$ at which we raise $x_{J,t}$ is
governed by pseudo-dual cost variables $\hat{\sigma}_{I,j}$
and pseudo-primal variables $\hat{x}_{I,j}$, defined for all
batches $(I,j)$. We maintain the equation
$$
\hat{x}_{I,j} = \left\{\begin{array}{ll}
                              \frac{1}{\ln k}\cdot\hat{\sigma}_{I,j} & \hat{\sigma}_{I,j} < 1, \\
                              \frac{1}{\ln k}\cdot e^{\hat{\sigma}_{I,j} - 1} & \hat{\sigma}_{I,j}\ge 1.
                              \end{array}\right.
$$
We set
$$
\frac{dx_{J,t}}{d\mu} = \max\left\{\frac{d\hat{x}_{I,j}}{d\mu}:\ c(I) = c(J)\right\}.
$$
Notice we schedule batches simultaneously for all colors with
fractional items in the buffer.

In order to complete the description of the algorithm's regular
execution, we need to explain how $\hat{\sigma}_{I,j}$ changes.
Initially, $\hat{\sigma}_{I,j}$ is set to $0$, and at certain events
(see below) we reset $\hat{\sigma}_{I,j}$ to $0$. During an interval 
$[\mu_1,\mu_2]$ with no reset, $\hat{\sigma}_{I,j}$ does not
decrease. In order to explain the increase in $\hat{\sigma}_{I,j}$,
consider the increase in $\sigma_{I,j}$ when $\mu$ changes by
an infinitesimal amount $d\mu$. Let $t = t(\mu)$, and let
$c = c(I)$. If $t \ge j$ or
if all the items in $B_c\cap I$ are matched by $(I,j)$ at time $t$
or later, then $\sigma_{I,j}$ does not increase, and $\hat{\sigma}_{I,j}$
does not change. Otherwise, $\sigma_{I,j}$ increases by $d\mu$
times the number of items in $B_c\cap I$ that are matched by
$(I,j)$ before time $t$. In this case, $\hat{\sigma}_{I,j}$ increases
by $d\mu$ times the number of items in $B_c^{act}\cap I$ that
are matched by $(I,j)$ before time $t$. We will later see that
in this case $d\hat{\sigma}_{I,j}\ge\frac{10}{11}\cdot d\sigma_{I,j}$.
We say that a batch $(J,t)$ of color $c$ that is scheduled during this 
increase is {\em relevant to} (the dual cost of the batch) $(I,j)$.
Intuitively, if $(J,t)$ is relevant to $(I,j)$, then when $x_{J,t}$ increases,
$\hat{x}_{I,j}$ increases by at most the same amount. Notice that
if a batch $(J,t)$ that is relevant to $(I,j)$ is scheduled without
interruption (i.e., it never reaches an item that is frozen at the
time slot it needs to be scheduled), then $J$ includes the last
item of $I$.

Occasionally during regular execusion, we reset $\hat{\sigma}_{I,j}$ 
to $0$. We call this a {\em regular reset} (to distinguish it from
other resets that happen when regular execusion is interrupted).
This happens in the following situation. Let the current time 
be $\mu$. Let $f = f(I,j)\in I$ be the first item that {\em interrupts}
(is not appended to) a scheduled batch $(J,t')$ ($t' < t(\mu)$)
that is relevant to $(I,j)$, because $f\in B_c^{frz}$ when
it needs to be appended. If at time $\mu$ the number of
items in $B_c$ that arrived before $f$ just dropped below
$\frac 1 2 |B_c|$, we reset $\hat{\sigma}_{I,j}$ to $0$. Notice 
that we do this only for the first such item $f\in I$, so for
any batch $(I,j)$, we do a regular reset at most once.
We denote the time of the regular reset by $\mu_0(I,j)$. 
If $(I,j)$ never experiences a regular reset, we put 
$\mu_0(I,j) = \infty$.
Also notice that if $\hat{\sigma}_{I,j}$ is reset to $0$, 
automatically $\hat{x}_{I,j}$ is reset to $0$.

Regular execution is interrupted in a few cases as follows. Upon
interruption, we keep executing the valid cases until none of them
hold, in which case regular execution is resumed.

{\em Case 1:}\/ A primal constraint~\eqref{eq: input constraints}
becomes satisfied, i.e., for some $i\in B$, $\sum_{(I,j):\ i\in I} x_{I,j}$
reaches $1$. In this case we increment $s_{c(i)}$. Notice that this also
changes $B_c^{act}$.

{\em Case 2:}\/ A primal constraint~\eqref{eq: output constraints}
becomes tight, i.e., $\sum\{x_{I,j}:\ j\le t < j + |I|\}$ reaches $1$.
In this case, we increment $t$. Each new item that enters the
buffer initializes its state as follows. If there are integral
items of the same color, it enters the buffer as integral.
Otherwise, it enters the buffer as frozen (however, the
frozen state may change immediately due to the application
of one of the following cases).

{\em Case 3:}\/ If $|B_c^{frz}| > \frac{k}{100\ln k}$ for
some color $c$, we schedule all the remaining
volume of $B_c^{act}$. (This may involve scheduling
several distinct batches, and it also increments $s_c$.) Then,
we change the state of all the items in $B_c^{frz}$ to integral.
(In particular this moves all of them to $B_c^{act}$.)
Finally, we reset $\hat{\sigma}_{I,j}$ (and hence $\hat{x}_{I,j}$) 
to $0$ for all batches $(I,j)$ of color $c$.

{\em Case 4:}\/ If $B_c^{act}$ is fractional and
$|B_c^{act}| < \frac{k}{10\ln k}$, we change the
state of all the items in $B_c^{frz}$ to fractional
(in particular, they move to $B_c^{act}$).

{\em Case 5:}\/ There is an integral block $B_c^{act}$,
and $\hat{\sigma}_{I,j}$ reaches $1$ for a color $c$
batch $(I,j)$. (Notice that taking into account the reset
in Case~3 above, we can assume that $I$ is a subset of 
$B_c^{act}$.)
We {\em suspend} all fractional scheduled batches
that haven't yet ended. We set
$\bar{y}_i = \frac{1}{2|B_c^{act}|}$ for every $i\in B_c^{act}$.
We schedule, starting at the current $t$, an integral
( weight-$1$) batch with all the items in $B_c^{act}$ followed
by any items that can be appended to that block while it
is being evicted from the buffer. We reschedule
the unfinished portion of the suspended batches following
this integral batch. Finally, we reset $\hat{\sigma}_{I,j}$ 
(and hence $\hat{x}_{I,j}$) to $0$ for all batches $(I,j)$ of 
color $c$. Notice that following this case, both
$s_c$ and $t$ are incremented by at least $|B_c^{act}|$.

{\em Case 6:}\/ Since the last application of this case, we've
moved past the end of regular execution fractionally scheduled 
batches of color $c$ with total weight at least $\frac{1}{10}$
(suspended batches are not considered to have ended).
We apply the same procedure as in Case~5 to the block
$B_c^{frz}$, except that we don't raise $\bar{y}$. To
distinguish batches scheduled by this case from integral
batches scheduled by Case~5, we will refer to the ones
we schedule here as {\em weight-$1$ fractional batches}.

\section{Analyzing the LP Algorithm}\label{sec: frac analysis}

We first observe that by the definition of the algorithm
(Case~$1$ and Case~$2$) it constructs a feasible fractional
solution.
\begin{observation}\label{obs: x feasible}
The primal solution $x$ is a feasible solution of $\LP_k$.
\end{observation}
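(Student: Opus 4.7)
The plan is to verify each of the three families of constraints of $\LP_k$ separately, showing that each is preserved as an invariant by the design of the algorithm. Non-negativity is immediate: all $x_{I,j}$ start at $0$ and every action of the algorithm (regular execution, Cases~1--6, and the terminal eviction) only raises them.

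For the output constraints~\eqref{eq: output constraints}, the argument hinges on the pointer $t$ and Case~2. During regular execution batches are scheduled only at $j = t(\mu)$, so raising $x_{J,t}$ never adds weight to any slot strictly less than $t$. The instant $\sum_{(I,j):\ j \le t < j+|I|} x_{I,j}$ reaches $1$, Case~2 fires and $t$ advances by one, so the constraint at the old slot is frozen at a value of $1$ and no subsequently scheduled batch can touch it. A short bookkeeping check confirms that the sum at the new $t$ is bounded by the sum just at the old $t$, since every batch contributing to the new slot also contributed to the old one while batches ending at the old slot drop out; hence feasibility is preserved across each increment. Cases~3, 5, and 6 schedule additional integral or weight-$1$ fractional batches starting at the current $t$, but Cases~5 and~6 explicitly suspend the competing fractional batches over the slots occupied by the new weight-$1$ batch, while Case~3 only completes already-scheduled fractional weight, so again no slot ever carries total weight exceeding $1$.

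For the input constraints~\eqref{eq: input constraints}, Case~1 advances $s_{c(i)}$ precisely when $\sum_{(I,j):\ i \in I} x_{I,j}$ reaches $1$, so any item ever passed over by the pointer $s_{c(i)}$ has its constraint satisfied. The only items whose constraints may still be unsatisfied when regular execution ends (i.e., when $t$ passes $k'+n$) are those still present in the algorithm's buffer; the explicit terminal step that evicts the remaining buffer contents using output slots up through $k+n$ adds the requisite integral weight to cover all of them, after which the input constraints of those items also hold. The main subtlety to watch is the preservation of output-constraint feasibility across Cases~3, 5, and~6, where the on-the-fly extension of batches and the suspension/rescheduling of fractional batches could in principle create slot conflicts; this reduces to checking that the mechanism only shifts future weight forward in time and that new batches always begin at the current $t$, never at an earlier slot.
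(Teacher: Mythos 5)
Your proposal is correct and takes essentially the same approach as the paper, which gives no proof at all beyond asserting that feasibility follows from the algorithm's definition (Cases~1 and~2 together with the terminal eviction); you have simply spelled out the invariant-checking that assertion presupposes. One small mischaracterization: Case~3 does not merely ``complete already-scheduled fractional weight'' --- it schedules \emph{new} batches covering the entire remaining volume of $B_c^{\mathrm{act}}$ (possibly several distinct batches, as the paper notes). The reason slot weights nonetheless stay $\le 1$ there is the same as everywhere else: Case~2 runs concurrently and advances $t$ the instant the current slot fills, so the freshly scheduled Case~3 weight spills into successive slots rather than overloading one. With that correction the argument is sound.
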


We now bound the total volume of items in the buffer that 
the algorithm schedules at any given time while in regular
execution.

\begin{claim}\label{cl: B_c bound}
If $B_c^{act}$ is fractional, then $|B_c |< \frac{12k}{100\ln k}$
and $|B_c^{act}| < \frac{11k}{100\ln k}$.
\end{claim}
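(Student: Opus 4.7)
The plan is to bound $|B_c^{frz}|$ and $|B_c^{act}|$ separately under the hypothesis that $B_c^{act}$ is fractional, and then add. The proof is a direct case analysis driven by the thresholds built into Cases~3 and~4, combined with the observation that, during any period in which $B_c^{act}$ is fractional, new arrivals of color $c$ always enter through $B_c^{frz}$, never directly into $B_c^{act}$.

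I would first bound $|B_c^{frz}|$. Case~3 is triggered whenever $|B_c^{frz}|>\frac{k}{100\ln k}$, and its effect is to flush $B_c^{frz}$ and turn $B_c^{act}$ integral. Since the hypothesis is that $B_c^{act}$ is currently fractional, Case~3 is not in effect, and consequently $|B_c^{frz}|\le \frac{k}{100\ln k}$.

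Next, I would bound $|B_c^{act}|$ by tracing the possible causes of its growth while $B_c^{act}$ is fractional. New items of color $c$ that arrive during this phase enter as frozen by Case~2 (there are no integral items of color $c$), so they join $B_c^{frz}$ rather than $B_c^{act}$. The other regular-execution event, Case~1, only removes items (it advances $s_c$). Cases~5 and~6 do not insert items into $B_c^{act}$ either: Case~5 requires $B_c^{act}$ to be integral, and Case~6 operates on $B_c^{frz}$. Hence the only event that can increase $|B_c^{act}|$ during the fractional phase is Case~4. Its precondition forces $|B_c^{act}|<\frac{k}{10\ln k}$ just before the transition, and its effect adjoins all of $B_c^{frz}$ (of size at most $\frac{k}{100\ln k}$ by the previous step) to $B_c^{act}$. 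Thus immediately after Case~4,
$$
|B_c^{act}|<\frac{k}{10\ln k}+\frac{k}{100\ln k}=\frac{11k}{100\ln k},
$$
and since subsequent events can only shrink $B_c^{act}$ until the next Case~4 application, this bound persists throughout the fractional phase.

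Summing the two bounds yields $|B_c|=|B_c^{act}|+|B_c^{frz}|<\frac{11k}{100\ln k}+\frac{k}{100\ln k}=\frac{12k}{100\ln k}$, as required. The only mildly delicate part is the exhaustive check that no case other than Case~4 can grow a fractional $B_c^{act}$; once that is in hand, the bound $\frac{11k}{100\ln k}$ on $|B_c^{act}|$ is immediate from the precondition and effect of Case~4 together with the bound on $|B_c^{frz}|$.
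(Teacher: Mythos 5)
Your proof is correct and takes essentially the same route as the paper's: bound $|B_c^{frz}|$ via the Case~3 threshold, bound $|B_c^{act}|$ via the Case~4 precondition plus the $B_c^{frz}$ bound, and add. The paper states the key fact more tersely (``we keep new items of color $c$ in $B_c^{frz}$, unless $|B_c^{act}|<\frac{k}{10\ln k}$''), whereas you spell out the exhaustive check that no event other than Case~4 can grow a fractional $B_c^{act}$; the substance is the same.
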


\begin{proof}
By Case~3, if $B_c^{act}$ is fractional, then 
$|B_c^{frz}|\le\frac{k}{100\ln k}$. By Case~4, we keep
new items of color $c$ in $B_c^{frz}$, unless
$|B_c^{act}| < \frac{k}{10\ln k}$. If $B_c^{act}$
drops below $\frac{k}{10\ln k}$, we move the
items in $B_c^{frz}$ to $B_c^{act}$, adding at
most $\frac{k}{100\ln k}$ new items, so
$|B_c^{act}| < \frac{11k}{100\ln k}$ always
(while fractional).
Combining this bound with the bound on
$B_c^{frz}$, we get that 
$|B_c |< \frac{12k}{100\ln k}$.
\end{proof}

\begin{corollary}\label{cor: volume beyond t-1}
At any time $\mu$, the volume of items in $B(\mu)$ 
that is scheduled beyond time $t(\mu)-1$ is less than 
$\frac{12k}{100\ln k}$.
\end{corollary}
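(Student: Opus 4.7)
The plan is to combine Claim~\ref{cl: B_c bound} with the output capacity constraints~\eqref{eq: output constraints}, the key observation being that every unit of $x$-weight sitting at slots $\geq t(\mu)$ comes from open fractional batches of colors whose active block is still fractional.

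First, I would verify this structural claim: during regular execution, any batch $(I,j)$ with $x_{I,j}>0$ and $j+|I|-1 \ge t(\mu)$ is an open fractional batch of some color $c$ with fractional $B_c^{act}$. The weight-$1$ integral batches of Case~5 and the weight-$1$ fractional batches of Case~6 are scheduled atomically together with an advance of $t$ past the end of the batch, and hence leave no tail; the suspended fractional batches rescheduled in Case~5 were already fractional batches and their color still has a fractional active block.

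Next, I would bound the beyond-$t$ extent of each such open batch. The items of $I$ occupying output positions $t(\mu), t(\mu)+1, \ldots, j+|I|-1$ are those that have not yet been played out, and by the batch-extension rule only items currently in the fractional $B_c^{act}$ are ever appended to $J$; this gives $j + |I| - t(\mu) \le |B_c^{act}|$. Therefore the last slot carrying positive schedule weight under $x$ is at most $t(\mu)-1 + \max_c |B_c^{act}|$, where the maximum ranges over colors $c$ with fractional $B_c^{act}$.

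Finally, the width-$1$ constraints~\eqref{eq: output constraints} imply that the total $x$-weight in slots $\geq t(\mu)$ is at most the number of slots beyond $t(\mu)-1$ that carry positive schedule, which is at most $\max_c |B_c^{act}|$. By Claim~\ref{cl: B_c bound} this quantity is strictly less than $\tfrac{11k}{100\ln k}$, and in particular less than $\tfrac{12k}{100\ln k}$. The volume of items of $B(\mu)$ scheduled beyond $t(\mu)-1$ is dominated by this total scheduled weight, which closes the corollary. The main obstacle is the structural first step: I have to walk through Cases~3--6 carefully to confirm that no case leaves an open tail past $t$ that is not owned by a fractional $B_c^{act}$; the rest is an elementary packing bound.
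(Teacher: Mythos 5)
Your high-level plan (dominate the $B(\mu)$-volume by the total $x$-weight parked at slots $\ge t(\mu)$, then use the width-$1$ constraints) is a legitimate alternative framing, and your first step (open batches crossing $t(\mu)$ are regular fractional batches) matches the paper's observation. But the load-bearing step -- the claim that each such batch's tail satisfies $j + |I| - t(\mu) \le |B_c^{act}(\mu)|$ -- is not justified and I believe is false in general. The argument ``only items currently in $B_c^{act}$ are ever appended'' shows each item was in $B_c^{act}$ \emph{at its appending time}, but does not show that the whole tail lies in $B_c^{act}(\mu)$ at the current time. An item $i$ in the tail may have had its primal constraint~\eqref{eq: input constraints} become tight (through weight contributed by \emph{other} batches of the same color) before $t$ reaches $i$'s slot in $(I,j)$. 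Case~1 then increments $s_c$ past $i$, so $i$ leaves $B_c$ (and hence $B_c^{act}$) entirely, yet $i$ is still scheduled at a slot $\ge t(\mu)$ by $(I,j)$ and continues to occupy an output position beyond $t(\mu)-1$. Because the tail can accumulate such already-satisfied items, the number of occupied slots beyond $t(\mu)-1$ is not controlled by the \emph{current} $|B_c^{act}|$, so your step~3 estimate on the number of positively-weighted slots does not go through.

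The paper's proof avoids exactly this issue by summing batch-by-batch rather than slot-by-slot: the total weight of batches crossing slot $t(\mu)$ is less than $1$ (slot $t(\mu)$ is not tight), and for each such batch $(I,j)$ one bounds $|B(\mu)\cap I|$, not the tail length $j+|I|-t(\mu)$. The set $B(\mu)\cap I$ is trivially a subset of $B_{c(I)}(\mu)$, and Claim~\ref{cl: B_c bound} then gives $|B(\mu)\cap I| < \frac{12k}{100\ln k}$ whenever the batch is fractional; already-satisfied tail items simply drop out of the count rather than having to be bounded. If you want to salvage your slot-by-slot route, you would need a separate argument bounding the number of color-$c$ items that are already out of $B$ but still scheduled at slots $\ge t(\mu)$; that is precisely the quantity the paper's formulation never has to touch.
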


\begin{proof}
There is a total weight of less than $1$ of scheduled batches 
that extend to time $t = t(\mu)$ and beyond (otherwise, 
Case~2 would have happened). Each of these batches is fractional,
so by Claim~\ref{cl: B_c bound} it has less than $\frac{12k}{100\ln k}$ 
items that are in $B(\mu)$. Therefore, the total volume
that is scheduled of items in $B(\mu)$ is less than
$\frac{12k}{100\ln k}$.
\end{proof}

\begin{claim}\label{cl: scheduled volume}
Consider the (partial) solution $x$ at a time of regular 
execution of  the algorithm.
The total volume that the algorithm already scheduled
of items that are currently in $B$ is bounded by
$$\sum_{(I,j)} \left(x_{I,j}\cdot |B\cap I|\right) < |B|-k'.$$
\end {claim}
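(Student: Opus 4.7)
The plan is to rewrite the LHS as $\sum_{i\in B}w_i$, where $w_i := \sum_{(I,j)\ni i}x_{I,j}$, and then combine a buffer-accounting identity with Corollary~\ref{cor: volume beyond t-1}. Let $A$ denote the set of items whose primal constraint~\eqref{eq: input constraints} is already satisfied, and let $N := |A|+|B|$ be the number of items that have arrived so far. Since $w_i = 1$ for every $i\in A$ (once $w_i$ reaches $1$, Case~1 increments $s_{c(i)}$ past $i$ and no subsequent batch contains $i$), one has $\sum_i w_i = |A| + \sum_{i\in B}w_i$. On the other hand, $\sum_i w_i = \sum_{(I,j)}x_{I,j}|I|$ equals the total column weight, namely $T + W_{\ge t}$, where $T := t-k-1$ is the number of tight slots (those $<t$) and $W_{\ge t}$ is the total column weight on slots $\ge t$. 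Rearranging,
\[
|B| - \sum_{i\in B}w_i \;=\; N - T - W_{\ge t}.
\]

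The next step is to verify the buffer-full invariant $N - T = k$ during regular execution: it holds initially ($t=k+1$, $N=k$, $T=0$), and every event that advances $t$ (Case~2 by one, or Cases~5 and~6 by the length of the integral/weight-$1$ batch they schedule) is matched by an equal number of new input items entering the buffer to preserve occupancy. Substituting gives $|B| - \sum_{i\in B}w_i = k - W_{\ge t}$, so it is enough to show $W_{\ge t} < k - k' = 2k/\ln k$.

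The final step is to bound $W_{\ge t}$ by adapting the proof of Corollary~\ref{cor: volume beyond t-1}. After Cases~5 and~6 complete, $t$ has jumped past the integral or weight-$1$ batches they schedule, so during regular execution every batch that still extends to a slot $\ge t$ is fractional. For each such active batch $(I,j)$ of color $c$, the items of $I$ occupying positions whose slots lie in $[t,j+|I|-1]$ are all in $B$---the items of $I$ that have transitioned to $A$ are the earliest in input order and therefore sit at the earliest positions of $I$---so $j+|I|-t\le|I\cap B|$. By Claim~\ref{cl: B_c bound}, $|I\cap B|\le|B_c|<12k/(100\ln k)$, and since slot $t$ is not tight, $\sum_{\text{active}}x_{I,j}<1$. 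Therefore
\[
W_{\ge t} \;=\; \sum_{\text{active}}x_{I,j}(j+|I|-t) \;\le\; \sum_{\text{active}}x_{I,j}\,|I\cap B| \;<\; \tfrac{12k}{100\ln k},
\]
whence $|B| - \sum_{i\in B}w_i > k - \tfrac{12k}{100\ln k} > k - \tfrac{2k}{\ln k} = k'$.

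The hard part will be the alignment assertion $j+|I|-t\le|I\cap B|$: in a multi-batch scenario an item of $I$ may transition to $A$ (because a previously scheduled batch contributes to its coverage) while its slot in the current batch $I$ still lies at or beyond $t$, which would place items of $A'$ at late positions of $I$. The plan to handle this is to exploit the single-active-batch-per-color structure of regular execution together with the fact that $t$ advances only when a slot's column weight reaches $1$: the slot occupied by any such ``stale'' $A'$-item still carries column weight strictly below $1$, so its contribution can be absorbed into the same $|B_c|$-based budget used above, preserving the $O(k/\ln k)$ bound on $W_{\ge t}$.
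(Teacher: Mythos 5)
Your proof follows essentially the same route as the paper's: double-count the scheduled volume by items versus by output slots, exploit the tightness of all slots below $t$, and invoke Corollary~\ref{cor: volume beyond t-1} to control the ``overhang'' past $t-1$. The paper's version is terser: it writes down directly
$$
\sum_{(I,j)} x_{I,j}|B\cap I| \;<\; (t-1-k) - (t-1-|B|) + \tfrac{12k}{100\ln k},
$$
where $t-1-k$ is the scheduled volume at tight slots, $t-1-|B|$ is the number of items whose constraint is already satisfied, and the last term is the Corollary~\ref{cor: volume beyond t-1} bound; then it plugs in $k-k' = \tfrac{2k}{\ln k}$. Your rearrangement via $|B|-\sum_{i\in B}w_i = N - T - W_{\ge t}$ and the buffer-full invariant $N-T=k$ makes exactly the same algebra explicit and is a clean way to present it (and, as you implicitly note, the direction of the inequality survives if $w_i > 1$ for some $i\in A$).

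Two comments on the ``hard part'' you flag. First, you are right that both your accounting and the paper's reduce to showing that the total slot-$\ge t$ overhang $W_{\ge t}$ is small, whereas Corollary~\ref{cor: volume beyond t-1} as stated literally bounds only the portion of the overhang lying on items currently in $B$. Noticing this gap is a genuine contribution; the paper elides it. Second, however, the repair you sketch is not on solid ground: regular execution does \emph{not} have a single active batch per color --- a new batch $(J,t)$ is opened at each successive output slot as long as $B_c^{act}$ is fractional, so several batches of the same color scheduled at consecutive tight slots can all extend past the current $t$. Consequently the ``single-active-batch-per-color'' structure you invoke, and the observation that the current slot's column weight is strictly below $1$, do not by themselves give the alignment assertion $j+|I|-t\le |I\cap B|$. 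What one actually needs is a bound on the overhang length $j+|I|-t$ itself (e.g.\ by comparing it to $|B_c^{act}|$ at the moment each item was appended, so that it never exceeds $\tfrac{11k}{100\ln k}$), independent of whether the items at slots $\ge t$ have since left $B$; together with the slack between $\tfrac{12k}{100\ln k}$ and the target $\tfrac{2k}{\ln k}$, that would close the argument. As written, your last paragraph gestures in the right direction but rests on a false structural premise.
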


\begin{proof}
Let $\mu$ denote the current time, and let $t=t(\mu)$ denote
the current output time slot. By Corollary~\ref{cor: volume beyond t-1},
the total volume of items in $B$ that is scheduled beyond time $t-1$
is less than $\frac{12k}{100\ln k}$. The total volume that is  
scheduled before time $t$ is exactly $t-1-k$. By the definition
of $B$, exactly $t-1-|B|$ items are scheduled to be removed completely 
from the buffer. Therefore, the volume that is scheduled from items in $B$
is bounded as follows:
$$
\sum_{(I,j)} \left(x_{I,j}\cdot |B\cap I|\right) < (t-1-k) - (t-1-|B|) + \frac{12k}{100\ln k} =
$$
$$
= |B|-k+\frac{12k}{100\ln k}= |B|-\left(k' +\frac{2k}{\ln k}\right)+\frac{12k}{100\ln k} 
< |B|-k'.
$$
\end{proof}

\begin{claim}\label{cl: x-hat bounded}
For every batch $(I,j)$, it holds that $\hat{x}_{I,j}\le \frac{11}{10}$
always.
\end{claim}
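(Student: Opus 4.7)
The plan is to split on the value of $\hat\sigma_{I,j}(\mu)$. When $\hat\sigma_{I,j}(\mu) < 1$, we have $\hat x_{I,j}(\mu) = \hat\sigma_{I,j}(\mu)/\ln k < 1/\ln k$, which is well below $11/10$ for all sufficiently large $k$, so the claim is immediate. So assume $\hat\sigma_{I,j}(\mu) \ge 1$. I would first argue that this forces $B_{c(I)}^{act}$ to be fractional throughout the reset-free window ending at $\mu$: Case~5 resets $\hat\sigma$ of all color-$c(I)$ batches the moment a color-$c(I)$ pseudo-cost reaches $1$ in an integral block, and the only way to leave the integral state is via such a Case~5 reset (after which newly arriving items enter as integral or frozen, never directly as fractional).

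Next, I would lower bound $\hat x_{I,j}(\mu)$ by the total weight scheduled for color $c(I)$. Let $\mu_0 \le \mu$ be the most recent reset time of $\hat\sigma_{I,j}$ (or $\mu_0 = 0$ if none has occurred), so that $\hat x_{I,j}(\mu_0) = 0$. Throughout $(\mu_0, \mu]$ regular execution is accumulating a color-$c(I)$ scheduled batch $(J_c, t(\mu'))$, and the algorithm's rule
$$\frac{dx_{J_c, t(\mu')}}{d\mu'} \;=\; \max\left\{\frac{d\hat x_{I',j'}}{d\mu'} \,:\, c(I') = c(I)\right\} \;\ge\; \frac{d\hat x_{I,j}}{d\mu'}$$
holds at every instant. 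Integrating, the total weight $W$ added to color-$c(I)$ scheduled batches over $(\mu_0, \mu]$ satisfies $W \ge \hat x_{I,j}(\mu)$.

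Finally, I would upper bound $W$ by splitting the color-$c(I)$ batches receiving weight on $(\mu_0, \mu]$ into those whose output window has ended by time $\mu$ (in the Case~6 sense that $t(\mu)$ has moved past their last slot) and those still covering $t(\mu)$. Since no reset of $\hat\sigma_{I,j}$ happened in $(\mu_0, \mu]$, and Case~6 explicitly applies Case~5's procedure---including its $\hat\sigma$-reset step---to color $c(I)$, no Case~6 event fired during the interval. Hence the total weight of ended color-$c(I)$ fractional batches since the most recent Case~6 invocation is strictly less than $1/10$, and in particular the contribution to $W$ from ended batches is $< 1/10$. The in-progress color-$c(I)$ batches all cover slot $t(\mu)$, so by the packing constraint~\eqref{eq: output constraints} their total weight is at most $1$. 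Combining, $W < 11/10$, which yields $\hat x_{I,j}(\mu) \le W < 11/10$.

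The main technical subtlety is verifying that the rate inequality survives the instantaneous case-handling interruptions triggered by other colors (Cases~1--6 applied to $c' \ne c(I)$). These events do not advance $\mu$, so they contribute measure zero to both integrals; suspended color-$c(I)$ batches have their weights preserved and are resumed later, and no $\hat y$ or $\hat z$ updates occur during discrete case-handling. Hence the comparison passes through cleanly.
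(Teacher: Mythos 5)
Your proof is correct, but it takes a genuinely different route from the paper's. Both arguments lower-bound $\hat x_{I,j}$ by the total weight placed into color-$c(I)$ scheduled batches since the last reset, via the rate rule $\frac{dx_{J_c,t}}{d\mu}=\max\{\frac{d\hat x_{I',j'}}{d\mu}:c(I')=c(I)\}$, and both produce the $\frac{1}{10}+1$ split. The difference is how the two pieces are bounded. The paper classifies relevant scheduled batches as \emph{uninterrupted} (they extend to the last item of $I$) or \emph{interrupted} (cut short by a frozen item): the uninterrupted ones total at most $1$ because after one unit of weight the last item of $I$ leaves $B$ and $\sigma_{I,j}$ can no longer increase; the interrupted ones are controlled through a delicate chain of observations about the interrupting frozen item $f$, the subsequent schedule, and when Case~6 fires. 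You instead classify by whether a batch's schedule has \emph{ended} or is still \emph{in progress} at time $\mu$: ended weight is $<\frac{1}{10}$ directly from the Case~6 counter (noting, crucially, that a Case~6 firing in the window would have reset $\hat\sigma_{I,j}$, contradicting the choice of $\mu_0$); in-progress batches all cover slot $t(\mu)$, so the output packing constraint~\eqref{eq: output constraints} (which the algorithm maintains non-tight at slot $t(\mu)$, per Case~2) caps their total weight by $1$. Your use of the packing constraint replaces the paper's ``last item of $I$ is removed'' argument and sidesteps the long sub-argument about interruptions; you also correctly establish that $B_{c(I)}^{act}$ must be fractional throughout the reset-free window whenever $\hat\sigma_{I,j}>1$, which is what makes the rate comparison valid. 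The one small imprecision is the boundary case $\hat\sigma_{I,j}(\mu)=1$ with $B_{c(I)}^{act}$ integral (right as Case~5 fires), but there $\hat x_{I,j}=\frac{1}{\ln k}$ anyway, so it is harmless.
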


\begin{proof}
Let $c=c(I)$. Notice that $\hat{x}_{I,j}$ can increase beyond $1$ only
while $B_c^{act}$ is fractional.
Consider an interval $[\mu_1,\mu_2]$ 
of uninterrupted regular execution
where $\sigma_{I,j}$ increases by $\delta$ and the output time slot is $t$.
Let $(J,t)$ be a scheduled batch relevant to $(I,j)$. Notice that
$x_{J,t}$ is at least the total increase in $\hat{x}_{I,j}$ during 
the interval $[\mu_1,\mu_2]$ where $x_{J,t}$ was set. We prove 
the claim by bounding the total increase in $x_{J,t}$ for all 
relevant $(J,t)$.

First notice that the total increase for all $x_{J,t}$ that extend all 
the way to the last item in $I$ must be at most $1$. This is because after 
an increase of $1$ the last item of $I$ is no longer in $B$, and 
$\sigma_{I,j}$ cannot increase further. (A batch that is suspended
and later rescheduled is considered here as a single batch.)
The only reason that we 
do not extend $J$ to the last item of $I$ is if some item along the 
way is frozen at the time it needs to be appended to $J$.
If when 
we reach the end of $J$'s schedule, we've accumulated a cost of 
at least $\frac{1}{10}$ of interrupted schedules since the last time
a weight-$1$ fractional batch of color $c$ was scheduled, then we schedule a 
weight-$1$ fractional batch beginning with $B_c^{frz}$. Notice that 
the items of $B_c^{frz}$ are scheduled past where they
are matched by $(I,j)$, so all the remaining items of $I$ will be 
evicted from the buffer completely, and $\sigma_{I,j}$ will not 
increase further. 

So consider the first interrupted such $(J,t)$ (so the first color $c$ 
item following $J$ is $f(I,j)$). At the time $\mu$
when the interruption occurs, all the items in $B_c^{frz}(\mu)$ are 
not appended to
any scheduled batch. Let $\mu'$ denote the time when the items
that were in $B_c^{frz}(\mu)$ are schduled to be removed
entirely from the buffer (i.e., they are removed from $B$). Every
scheduled batch that removes these items must schedule them after
time slot $t(\mu)$, so unless such a batch is interrupted, it includes 
the last item of $I$.

Suppose that between time $\mu$ and time $\mu'$ no other scheduled
batch of color $c$ is interrupted. If $t(\mu')$ is past the end of the first
scheduled batch that removes $B_c^{frz}(\mu)$, then this batch is not
interrupted and thus it includes the last item of $I$. Therefore,
none of the scheduled batches that remove $B_c^{frz}(\mu)$ are 
interrupted before they include the last item of $I$. Otherwise,
at time $\mu'$ there
must be a total weight of $1$ of scheduled batches of this color,
because each item in $B_c^{frz}(\mu)$ is scheduled with total weight
$1$ in batches that begin past $t(\mu)$, and none of these batches 
are interrupted until $t(\mu')$ (by our above assumption). In
this case, at time $\mu'$ all of $B_c^{act}(\mu')$ is scheduled to
be removed completely from the buffer. Also, it must be that
$B_c^{frz}(\mu') = \emptyset$, because if $B_c^{frz}$ was non-empty
just before $\mu'$, it is moved to $B_c^{act}$ due to Case~4 of
the algorithm (as $|B_c^{act}|$ drops to $0$). Therefore, while all
these batches are still being scheduled, any new item of color $c$
is appended to all of them and is thus removed from the buffer, so
none of them are interrupted at least until the first one ends. As the
first such batch that ends must include the last item of $I$, they
all must include the last item of $I$.

If there is an interruption between time $\mu$ and time $\mu'$,
repeat this argument for the new interrupted batch $(J,t)$ and
interrupting $B_c^{frz}$. Notice that any such interruption must
have the property that $(J,t)$ must schedule the previous $B_c^{frz}$ 
past $t(\mu)$, and therefore past where it is matched by $(I,j)$.
If we accumulate $\sum_{J,t} x_{J,t}\ge\frac{1}{10}$ of
interrupted $(J,t)$ at some point, then we schedule a 
weight-$1$ fractional batch, and by the argument above
$\sigma_{I,j}$ does not increase further. Notice that in this
case the last such $x_{J,t}\le 1$ so $\hat{x}_{I,j}< \frac {11}{10}$.
Otherwise, the weight
of interrupted $(J,t)$ is less than $\frac{1}{10}$ and the weight
of uninterrupted $(J,t)$ is at most $1$, so
$\hat{x}_{I,j}< \frac{11}{10}$. (Notice that along the way we
might have reset $\hat{x}_{I,j}$, but this can only decrease its
value, and we analyzed aggregate increase $\hat{x}_{I,j}$.)
\end{proof}

\subsection{Dual feasibility}

The main technical difficulty is to show that the dual solution that the
algorithm computes is a feasible solution.
In order to prove this, we need to show that
the constraints~\eqref{eq: dual constraints} are satisfied, namely that
for every batch $(I,j)$,
$$
\sum_{i\in I}y_{i}-\sum_{j'=j}^{j+|I|-1}z_{j'} \le 1.
$$
We consider several cases in the following claims. These cases will be
combined in the pursuing proof of Lemma~\ref{lm: (y,z) feasible}.
Consider a batch $(I,j)$. The items in $I$ are partitioned by the
algorithm's execution into segments. A segment is a maximal
substring of items with the same state when removed from the
buffer. Thus, there are alternating fractional and integral segments.
An integral segment consists of a block of items that were removed
together in a single application of Case~5 of the algorithm. In
between two integral segments (or an integral segment and an
endpoint of $I$, or two endpoints of $I$) there is a fractional
segment.

We first deal with batches that do not contain an integral
segment.
\begin{claim}\label{cl: fractional batch}
For every batch $(I,j)$ for which all of $I$ is one fractional segment,
$$
\sigma_{I,j} = \sum_{i\in I}\hat{y}_{i}-\sum_{j'=j}^{j+|I|-1}\hat{z}_{j'} 
= O(\log \log k).
$$
\end {claim}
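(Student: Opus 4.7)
The plan is to upper bound $\sigma_{I,j}$ by first bounding the pseudo-dual $\hat{\sigma}_{I,j}$ and then relating it to the true dual cost. I would use Claim~\ref{cl: x-hat bounded}, which gives $\hat{x}_{I,j}\le\frac{11}{10}$, combined with the exponential update rule $\hat{x}_{I,j}=\frac{1}{\ln k}e^{\hat{\sigma}_{I,j}-1}$ (valid when $\hat{\sigma}_{I,j}\ge 1$), to conclude that $\hat{\sigma}_{I,j}\le 1+\ln(\tfrac{11\ln k}{10})=O(\log\log k)$ at every moment.

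Next I would compare infinitesimal rates of growth. The algorithm raises $\hat{\sigma}_{I,j}$ at a rate proportional to the count of items in $B_c^{act}\cap I$ matched by $(I,j)$ to slots before $t$, while $\sigma_{I,j}$ rises at a rate proportional to the count of items in $B_c\cap I$ matched to slots before $t$. By Claim~\ref{cl: B_c bound} together with the invariant maintained by Case~4 (which forces $B_c^{frz}=\emptyset$ whenever $|B_c^{act}|<\tfrac{k}{10\ln k}$ while $B_c^{act}$ is fractional) and the fact that $B_c^{frz}=\emptyset$ whenever $B_c^{act}$ is integral, we have $|B_c^{act}|\ge\frac{10}{11}|B_c|$. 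This would translate to $d\hat{\sigma}_{I,j}\ge\frac{10}{11}\,d\sigma_{I,j}$ whenever $\sigma_{I,j}$ is strictly increasing.

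The technical heart is then to show that for a purely fractional batch, $\hat{\sigma}_{I,j}$ is reset only $O(1)$ times during the window in which $\sigma_{I,j}$ can still grow (the window where some item of $I$ remains in $B$ and $t\le j+|I|-1$). Case~3 is ruled out in this window because it would promote $I\cap B_c^{frz}$ to integral and, by Case~2's entry-state rule, force every item of $I$ arriving later to enter as integral---contradicting the purely fractional assumption. Case~5 presupposes an integral $B_c^{act}$, which only arises after a Case~3 event, and is hence also ruled out. The regular reset fires at most once per batch by construction. The subtle case is Case~6, which can in principle fire many times; my plan here is to exploit its $\frac{1}{10}$-weight trigger together with the fact that each firing evicts $B_c^{frz}$ in its entirety, so $I\cap B_c^{frz}$ is flushed out at each firing and only $O(1)$ firings can materially contribute to $\sigma_{I,j}$.

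Combining, $\hat{\sigma}_{I,j}$ grows by at most $O(\log\log k)$ in each of $O(1)$ inter-reset intervals, and the $\tfrac{10}{11}$ ratio gives $\sigma_{I,j}=O(\log\log k)$. I expect the Case~6 bookkeeping in the third step to be the main obstacle.
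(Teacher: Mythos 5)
Your plan is essentially the paper's proof: bound $\hat{\sigma}_{I,j}$ via Claim~\ref{cl: x-hat bounded} and the exponential rule, relate $d\hat{\sigma}_{I,j}\ge\frac{10}{11}\,d\sigma_{I,j}$, and control the number of resets. Two refinements are worth folding in. First, for the $\frac{10}{11}$ rate comparison, the cardinality bound $|B_c^{act}|\ge\frac{10}{11}|B_c|$ is not by itself enough: you must also argue that if $(I,j)$ matches any $B_c^{frz}$ item before the current $t$ then it matches all of $B_c^{act}$ before $t$, and you need $B_c^{act}\subseteq I$. The paper secures the latter by first noting that one may assume $I$ is maximal without an integral segment (extending $I$ in either direction only increases the aggregate growth of $\sigma_{I,j}$, since you only sum increments), which is a step your sketch omits.

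Second, your unease about Case~6 is understandable, since the paper's one-line justification (``because $I$ does not contain an integral segment'') visibly disposes of Case~3 and Case~5 but says nothing explicit about Case~6. However, you do not need a new bookkeeping argument: the relevant fact is already established in the proof of Claim~\ref{cl: x-hat bounded}, namely that when a weight-$1$ fractional batch of color $c$ is scheduled in a situation relevant to $(I,j)$, the items of $B_c^{frz}$, including $f(I,j)$, are placed past where $(I,j)$ matches them, so all the remaining items of $I$ are evicted from $B$ and $\sigma_{I,j}$ cannot increase any further. Hence a Case~6 reset that occurs while $s_c\in I$ is the last event that can matter, and the operative reset count is exactly one (the regular reset at $\mu_0(I,j)$), not merely $O(1)$. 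The paper then concludes with $\sigma_{I,j}\le 2\cdot\frac{11}{10}\cdot\max\hat{\sigma}_{I,j}$, the factor $2$ being for the single reset. Once these two points are made precise, your plan reproduces the paper's argument.
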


\begin{proof}
Denote $c = c(I)$. Notice that $\sigma_{I,j}$ increases only when $t > j$
and $s_c \in I$ (this is a necessary but not sufficient condition). We bound
the total increase in $\sigma_{I,j}$, ignoring possible decreases along the
way. Therefore, we may assume that $I$ is a maximal set without an
integral segment, because extending it backwards and forwards can only
make the sum of increases larger. To see this, notice that if $t > j$, then
extending $I$ backwards adds items whose $\hat{y}$ value possibly
increases, whereas its corresponding $\hat{z}$ value remains fixed.
Extending $I$ forwards adds items whose $\hat{y}$ value definitely
increases (because $s_c\in I$), and its corresponding $\hat{z}$ value
possibly also increases.

Notice that there is at most one value $\mu_0 = \mu_0(I,j)$ of $\mu$ 
where $\hat{\sigma}_{I,j}$
(and therefore $\hat{x}_{I,j}$) is reset to $0$ while $s_c\in I$, 
because $I$ does not contain
an integral segment. Recall that $f=f(I,j)$ is the first
item in $I$ that is in $B_c^{frz}$ when we need to append it to a relevant 
scheduled batch. Then $\mu_0$ is the smallest value of $\mu$ for which
$|\{i\in B_c:\ i\ge f\}|\ge \frac 1 2 |B_c|$.

Notice that whenever $\sigma_{I,j}$
increases by $\delta$, then $\hat{\sigma}_{I,j}$ increases by at least
$\frac{10}{11}\cdot\delta$. This is because the increase in $\hat{\sigma}_{I,j}$
is incurred by all the items that increase $\sigma_{I,j}$, excluding those that
are currently frozen. However, $|B_c^{act}|\ge \frac{10}{11}\cdot |B_c|$ and
if $(I,j)$ matches any item of $B_c^{frz}$ before the current time $t$, then it
matches all the items of $B_c^{act}$ before time $t$ as well. (As $B_c^{act}$
is fractional and $(I,j)$ is maximal, $B_c^{act}\subseteq I$.)

Notice that
$$
\hat{\sigma}_{I,j} = \left\{\begin{array}{ll}
                                      \hat{x}_{I,j}\cdot\ln k & \hat{x}_{I,j}\le\frac{1}{\ln k},\\
                                      1 + \ln\hat{x}_{I,j} + \ln\ln k & \hbox{otherwise.}
                                      \end{array}\right.
$$
By Claim~\ref{cl: x-hat bounded}, $\hat{x}_{I,j}\le\frac{11}{10}$ always.
Therefore, $\hat{\sigma}_{I,j} < 2 + \ln\ln k$ always. Because
$\hat{\sigma}_{I,j}$ is reset at most once, we get that 
$\sigma_{I,j}\le 2\cdot\frac{11}{10}\cdot\max\{\hat{\sigma}_{I,j}\} < \frac{22}{5} + \frac{11}{5}\cdot\ln\ln k$.
\end{proof}

The proof of the following property is useful
in the rest of the analysis.
Consider a batch $(I,j)$ of color $c$.  
Let $I_1,I_2,\ldots I_m$ be its integral segments
(by the order of the matching).
Let $j_r$ be the time that the first item of $I_r$ is matched by $(I,j)$,
and let $t_r$ be the time slot where $I_r$ was scheduled by the algorithm.
Denote by  $\Delta_r=j_r-t_r$ the difference between these times.
We also denote by $\ell_r\le|I_r|$ the number of items from $I_r$
that are in the algorithm's buffer when the algorithm decides to
remove $I_r$ (starting at time slot $t_r$).

\begin{claim}\label{cl: delta_p}
For every batch $(I,j)$ with $m$ integral segments, and
for every $1\le p<m$, we have that $\Delta_p\ge\sum_{r=p+1}^{m-1}{\ell_r}$.
\end{claim}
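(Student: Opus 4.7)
The plan is to prove the claim by induction on $p$ decreasing from $m-1$ down to $1$, with a time-gap lemma as the inductive step and a feasibility argument for the base case.

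The heart of the argument is a time-gap lemma: for each $r$ with $1 \le r \le m-1$,
$$t_{r+1} - t_r \ge |I_r| + |F_r| + \ell_{r+1},$$
where $F_r$ denotes the fractional segment of $(I,j)$ between $I_r$ and $I_{r+1}$. To prove it I would observe: (i) since the algorithm's $r$-th integral batch appends every color-$c$ input arriving during its scheduling, any item of $I$ arriving in the output range $[t_r,t_r+|I_r|-1]$ is in $I_r$ (so for $r<m$ the algorithm's batch occupies exactly $|I_r|$ output slots and contains no items outside $I$); (ii) consequently, items of $F_r$ must arrive at input slots $\ge t_r + |I_r|$, occupying $|F_r|$ distinct input slots; (iii) the first $\ell_{r+1}$ items of $I_{r+1}$ come after $F_r$ in input order and must all lie in $B_c^{act}$ when Case~5 fires at output slot $t_{r+1}$, so their $\ell_{r+1}$ distinct input slots extend the range to push $t_{r+1}$ to at least $t_r + |I_r| + |F_r| + \ell_{r+1}$.

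Granted the time-gap lemma, the induction proceeds cleanly. Since consecutive segment starts in $I$ satisfy $j_{p+1} - j_p = |I_p| + |F_p|$, we obtain $\Delta_p - \Delta_{p+1} = (t_{p+1}-t_p) - (j_{p+1}-j_p) \ge \ell_{p+1}$. For the base case $p = m-1$, where the claim reduces to $\Delta_{m-1} \ge 0$, I would invoke feasibility of the batch $(I,j)$: the $\ell_m$-th item of $I_m$ must arrive by its output slot $j_m + \ell_m - 1$ in $(I,j)$, while by the same input-order counting used in the time-gap lemma it arrives no earlier than $t_{m-1} + |I_{m-1}| + |F_{m-1}| + \ell_m - 1$; these together yield $j_m \ge t_{m-1} + |I_{m-1}| + |F_{m-1}|$, hence $j_{m-1} \ge t_{m-1}$. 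Telescoping then gives $\Delta_p = \Delta_{m-1} + \sum_{r=p}^{m-2}(\Delta_r - \Delta_{r+1}) \ge \sum_{r=p+1}^{m-1} \ell_r$.

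The main obstacle is the time-gap lemma, which requires careful bookkeeping of the algorithm's synchronization between input arrivals and Cases~3, 5, and 6. Specifically, one must verify that no item of $F_r$ can slip into an integral batch via the appending rule of Case~5 or the state change of Case~3, and that the convention for when an input ``has arrived'' relative to Case~5 firing at output slot $t_{r+1}$ is fixed consistently so that the counting bound is not weakened by an accumulating off-by-one. A secondary subtlety, needed implicitly throughout, is that for $r<m$ the algorithm's $r$-th integral batch contains no items outside $I$, which follows because any appended item arrives between the first and last items of $I$ and so must lie in $I$ by the definition of a batch.
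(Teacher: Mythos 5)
Your proof mirrors the paper's: your time-gap lemma $t_{r+1}-t_r\ge|I_r|+|F_r|+\ell_{r+1}$ is exactly the inequality the paper derives, via the same counting of arrivals (items of the intervening fractional segment and the first $\ell_{r+1}$ items of $I_{r+1}$ all enter the buffer in the input interval after $I_r$'s eviction ends), and the telescoping step $\Delta_p-\Delta_{p+1}\ge\ell_{p+1}$ is identical. The only real deviation is the base case $\Delta_{m-1}\ge 0$: the paper disposes of it (and of $\Delta_r>0$ for every $r<m$) with a short observation---if $t_r\ge j_r$ then, by feasibility of $(I,j)$, every subsequent item of $I$ arrives no later than the slot the algorithm's ongoing integral eviction reaches, so it gets appended and $I_r$ would be the last segment---whereas you re-derive it from batch feasibility plus the same arrival-count, which works but is more bookkeeping-heavy and more exposed to the off-by-one issues you yourself flag.
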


\begin{proof}
Notice that for every $1\le r<m$, $\Delta_r>0$. 
Otherwise the time slot where the algorithm starts removing the items 
in $I_r$ is after where they are matched by the batch $(I,j)$.
Thus, the algorithm removes all the remaining items of $I$. 
This mean that there would be no more integral segments after $I_r$. 

We now show that given $1\le r< m-1$, we have that
$\Delta_{r+1}<\Delta_{r}-\ell_r$. 
At time $t_r+|I_r|$ when we reach past the end of $I_r$'s
eviction, there are no items of this color in $B$. Therefore, all the items 
of the following fractional segment (denoted by $F_{r+1}$), 
and all the integral items that were in $B$ at the time
that segment $I_{r+1}$ was sheduled (starting at time slot $t_{r+1}$), 
enter the buffer during the input interval $[t_r+|I_r|+1, t_{r+1}]$. 
Therefore,
$t_{r+1}> t_r+|I_r|+|F_{r+1}|+\ell_{r+1}$.
Combined with the fact that,
$j_{r+1}=j_r+|I_r|+|F_{r+1}|$, we get that
$$
\Delta_{r+1}=j_{r+1}-t_{r+1}  
< j_r+|I_r|+|F_{r+1}|- \left( t_r+|I_r|+|F_{r+1}|+\ell_{r+1} \right) 
=\Delta_{r} -\ell_{r+1}
$$
This complets the proof as
$$
\Delta_p \ge \Delta_{p+1}+\ell_{p+1}\ge\Delta_{p+2}+\ell_{p+2}+\ell_{p+1}
\ge\cdots\ge\Delta_{m-1}+\sum_{p<r< m}{\ell_r}\ge \sum_{p<r< m}{\ell_r}.
$$
\end{proof}

\begin{claim}\label{cl: batch before alg}
Every batch $(I,j)$ such that the first item $i\in I$ is in $B$
at time $j$ contains a constant number of segments.
\end{claim}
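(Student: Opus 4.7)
The plan is to exploit the algorithm's FIFO eviction order per color together with Claim~\ref{cl: delta_p}. The algorithm evicts items of each color in strict input order: Case~1 advances $s_c$ only once the current $s_c$'s primal constraint is satisfied, and all other eviction mechanisms (the regular fractional scheduling together with Cases~3, 5 and~6) act on blocks starting from $s_c$. Consequently, if the first item $i \in I$ is in $B$ at the moment $t(\mu) = j$, no item of $I$ has been fully evicted by that moment; in particular, if $(I, j)$ has any integral segment, then the slot $t_1$ at which the algorithm schedules the first integral segment $I_1$ of $(I, j)$ satisfies $t_1 > j$.

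I then case-split on the first segment of $(I, j)$. If the first segment is $I_1$ itself, then $j_1 = j$ and $t_1 > j$ yields $\Delta_1 = j_1 - t_1 < 0$. Tracking the strict inequality $\Delta_{r+1} < \Delta_r - \ell_{r+1}$ established in the proof of Claim~\ref{cl: delta_p}, one obtains $\Delta_1 \ge m - 1 \ge 1$ whenever $m \ge 2$, a contradiction; hence $m \le 1$ and the batch has at most two segments in total. If instead the first segment is a nonempty fractional $F_0$ containing $i$, then FIFO forces $i$'s fractional eviction to complete at some output slot $j_i^{*} < t_1$, so $j \le j_i^{*} < t_1$ and therefore $\Delta_1 < j_1 - j = |F_0|$. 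I then bound $|F_0|$ by combining the $O(k/\ln k)$ bounds on $|B_c^{act}|$ and $|B_c^{frz}|$ from Claim~\ref{cl: B_c bound} and Case~3's threshold (which control the items of $F_0$ already in the buffer at time $t(\mu) = j$) with a counting of the fractional eviction slots for color $c$ available in $[j, t_1 - 1]$ (which control the items of $F_0$ arriving after $\mu$). Feeding the resulting bound on $\Delta_1$ back into Claim~\ref{cl: delta_p} should give $m = O(1)$ and hence a constant bound on the total number of segments.

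The main obstacle I anticipate is the fractional first-segment case. Turning the above rough accounting into an absolute constant bound on $\Delta_1$ requires carefully using the fact that each integral segment $I_r$ of $(I, j)$ forces the algorithm to first accumulate $\Omega(k/\ln k)$ items of color $c$ in $B_c^{frz}$ before Case~3 can fire and subsequently produce the integral block containing $I_r$. This spacing constraint on the slots $t_r$, combined with Claim~\ref{cl: delta_p} and the upper bound on $|F_0|$, should then give the desired absolute constant on the number of segments.
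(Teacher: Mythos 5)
Your overall architecture is the same as the paper's: reduce the statement to an upper bound on $\Delta_1$ and feed that into Claim~\ref{cl: delta_p}. Your integral-first-segment case is fine. But the fractional-first-segment case, which you yourself flag as the obstacle, has a genuine gap in the stated plan.

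You set up $\Delta_1 < j_1 - j = |F_0|$ and then announce that you will bound $|F_0|$. That cannot work: $|F_0|$ admits no absolute bound. A long run of color-$c$ arrivals that keep getting fractionally scheduled makes $F_0$ arbitrarily large, and the algorithm never promotes them to integral as long as $B_c^{frz}$ stays under the Case~3 threshold. So a bound on $\Delta_1$ that factors through a bound on $|F_0|$ is dead on arrival. What the paper actually does is bound $\Delta_1 = j_1 - t_1$ directly, and the mechanism is input-arrival counting, not eviction-slot counting. Concretely: at most $O(k/\ln k)$ items of $F_0$ are already in the buffer at output slot $j$ (Claim~\ref{cl: B_c bound}); every other item of $F_0$, and also the $\ell_1 \ge k/(100\ln k)$ items of $I_1$ that are present when the integral block is launched, arrive in the input strictly between slots $j$ and $t_1$; since the input delivers one item per slot, $t_1 - j \ge \big(|F_0| - O(k/\ln k)\big) + \Omega(k/\ln k)$, which rearranges to $\Delta_1 = |F_0| - (t_1 - j) = O(k/\ln k)$ regardless of how large $|F_0|$ is. Your phrase ``counting of the fractional eviction slots for color $c$ available in $[j,t_1-1]$'' is aimed at the wrong quantity: fractional evictions of color $c$ in that window do not constrain how many new $F_0$-items arrive, whereas the input clock does. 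If you replace ``bound $|F_0|$'' by ``bound $\Delta_1 = |F_0| - (t_1-j)$'' and replace eviction-slot counting by input-arrival counting, your sketch collapses to the paper's argument; as written, the key inequality is not obtainable by the route you describe.
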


\begin{proof}
Let $m$ be the number of integral segments in $(I,j)$.
 We start by showing that  $\Delta_1<\frac{11k}{100\ln k}$.
We assume that the first segment of $I$ is a fractional segment.
Otherwise, as  the first item $i\in I$ is in $B$ at time $j$,
all of $I$ is a single integral segment. 
Let $F_1$ be the first fractional segment. We also assume that 
$|F_1|> \frac{11k}{100\ln k}$ as otherwise  $\Delta_1<\frac{11k}{100\ln k}$ is triviall, 
because $t_1 >j$ and $j_1=j+|F_1|$. 
As there can be at most $\frac {11k}{100\ln k}$ fractional
items at time $j$, at least $|F_1|-\frac {11k}{100\ln k}$ items
are added to $F_1$ after time $j$. Furthermore, there are at least 
$\frac {k}{100\ln k}$ items from $I_1$ that arrive before time
slot $t_1$. These items arrive after the items of $F_1$ and therefore
after time slot $j$.
Therefore, $t_1\ge j+|F_1|-\frac {11k}{100\ln k} +
\frac {k}{100\ln k} \ge j_1 -\frac {k}{10\ln k}$.
Thus, in this case $\Delta_1 \le \frac {k}{10\ln k}$.

We now use Claim~\ref{cl: delta_p} and get that 
$$
\Delta_1\ge\sum_{1<r< m}{\ell_r}\ge (m-2)\cdot \frac{k}{100\ln k}.
$$
The upper and lower bounds on $\Delta_1$ imply that $m <13$.
\end{proof}

\begin{claim}\label{cl: batch after alg}
For every batch $(I,j)$ such that for every item $i\in I$ it holds
that $i\not\in B$ at the time it is scheduled by $(I,j)$,
$$
\sum_{i\in I}(\hat{y}_i+\bar{y}_i)-\sum_{t=j}^{j+|I|-1}\hat{z}_t =
O(\log \log k).
$$
\end{claim}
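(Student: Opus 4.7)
My plan is to decompose the dual cost $\sum_{i\in I}(\hat{y}_i+\bar{y}_i)-\sum_{t=j}^{j+|I|-1}\hat{z}_t$ according to the segmentation of $(I,j)$. Let $I_1,\ldots,I_m$ be the integral segments in order, and for each $r$ let $\mu^{(r)}$ denote the time at which Case~5 scheduled $I_r$, with $\ell_r=|B_c^{act}(\mu^{(r)})|$ as in Claim~\ref{cl: delta_p}. Write $\tau(j')$ for the $\mu$-time at which the algorithm's pointer $t$ reaches output slot $j'$.

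First, I would show that $\sigma_{I,j}$ is non-increasing in $\mu$, and hence $\sigma_{I,j}^{\text{final}}\le 0$. For every $i\in I$ matched to slot $j_i$ by $(I,j)$, the hypothesis says $i\notin B$ at slot $j_i$, i.e., $\mu_i^{\text{out}}\le\tau(j_i)$, where $\mu_i^{\text{out}}$ is the moment $s_{c(i)}$ passes $i$. Since $s_{c(i)}$ is monotone non-decreasing, once $i$ leaves $B$ it never returns, so $i\in B$ at time $\mu$ forces $\mu<\mu_i^{\text{out}}\le\tau(j_i)$ and therefore $t(\mu)<j_i$. Consequently no $i\in B_c\cap I$ can be matched by $(I,j)$ to a slot before $t$, and by the Fact $d\sigma_{I,j}/d\mu\le 0$ throughout. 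Combined with $\sigma_{I,j}(0)=0$, this gives $\sigma_{I,j}^{\text{final}}\le 0$.

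Next, I would show $\sum_{i\in I}\bar{y}_i=m/2$. The variable $\bar{y}_i$ is assigned only in Case~5: for each integral segment $I_r$ the invocation at $\mu^{(r)}$ sets $\bar{y}_i=\tfrac{1}{2\ell_r}$ for each $i\in B_c^{act}(\mu^{(r)})$, contributing exactly $\tfrac{1}{2}$ per segment; appended items get $\bar{y}_i=0$. Summing yields $m/2$.

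Combining, the dual cost is at most $\sigma_{I,j}^{\text{final}}+m/2\le m/2$. The remaining, and main, obstacle is to strengthen this to $O(\log\log k)$ by extracting a sufficiently negative bound on $\sigma_{I,j}^{\text{final}}$. Here I would exploit the gaps from Claim~\ref{cl: delta_p}: whenever $\Delta_r$ is sizable, $t$ reaches $j_r$ strictly after $\mu^{(r)}$, so each $\hat{z}_{j'}$ with $j'\in[j_r,j_r+|I_r|-1]$ accrues more than the paired $\hat{y}_i$ for $i\in B_c^{act}(\mu^{(r)})$ (which stops growing at $\mu^{(r)}$). Segment $I_r$ therefore contributes at most $\tfrac{1}{2}-\ell_r(\tau(j_r)-\mu^{(r)})$ to the dual cost, which is non-positive once $\tau(j_r)-\mu^{(r)}\ge\tfrac{1}{2\ell_r}$. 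Using Claim~\ref{cl: delta_p}'s bound $\Delta_p\ge\sum_{p<r<m}\ell_r$ together with $\ell_r\ge\tfrac{k}{100\ln k}$ (inherited from Case~3/Case~5) to propagate the cumulative gap, and handling the few segments with small $\Delta_r$ and the fractional segments by a Claim~\ref{cl: fractional batch}-style argument based on $\hat{x}_{I,j}\le\tfrac{11}{10}$ (Claim~\ref{cl: x-hat bounded}), I expect the negative contributions to absorb all but an $O(\log\log k)$ residual of the $m/2$ term. The delicate point will be quantifying the translation from the slot-gap $\Delta_r$ into the $\mu$-time gap $\tau(j_r)-\mu^{(r)}$, which depends on the algorithm's scheduling rate between Case~5 events.
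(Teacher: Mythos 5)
Your first two steps are sound: the hypothesis forces $\hat{y}_i\le\hat{z}_{M_{I,j}(i)}$ for every $i\in I$ (the paper's Observation~$(i)$), so $\sigma_{I,j}\le 0$ throughout, and $\sum_{i\in I}\bar{y}_i\le m/2$. The genuine gap is the third step, exactly where you write ``the delicate point.'' To show that most segments have non-positive net contribution you need a quantitative lower bound on the $\mu$-time elapsed between a Case~5 eviction and the moment $t$ reaches the slot where $(I,j)$ matches that block, and your proposal supplies none. The paper's missing tool is its Observation~$(ii)$: $\hat{z}_{t_p}-\hat{z}_{f_p}\ge 1/\ell_p$, where $f_p$ is the slot at which $I_p$ is defrosted by Case~3. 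This holds because $\hat{\sigma}_{I',j'}$ is reset to $0$ at the Case~3 event and must climb to $1$ to trigger Case~5, while its rate of increase in between is at most $\ell_p\,d\mu$. Without a bound of this kind your inequality $\frac 1 2-\ell_r(\tau(j_r)-\mu^{(r)})$ has no teeth.

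Even with that observation, the paper's accounting is structured differently from yours in a way that matters. It charges the negative slack not to the items of $I_p$ itself but to the $\Delta_p$ items of $I$ that $(I,j)$ matches to slots in $[t_p,j_p-1]$: each such item precedes $I_p$ in input, so $M_{alg}(i)\le f_p<t_p\le M_{I,j}(i)$ and it collects a deficit of at least $1/\ell_p$ over $[f_p,t_p]$. This gives $-\Delta_p/\ell_p$ per segment, and then Claim~\ref{cl: delta_p} together with $\ell_r\in[\tfrac{k}{100\ln k},k]$ limits the number of positive segments to $O(\log\log k)$. Your route would instead require converting the slot gap $\Delta_r$ directly into the $\mu$-time gap $\tau(j_r)-\mu^{(r)}$, which chains through many intermediate Case~3/Case~5 events for the same color and is considerably messier. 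The paper also splits the batch at the index $m'$ of the last segment with $t_p\le j$ (there the matched $\hat{z}$'s are all pinned to $\hat{z}_j$, and a chained version of Observation~$(ii)$ is used), a distinction your decomposition does not make. Finally, invoking a Claim~\ref{cl: fractional batch}-style argument for the fractional segments is unnecessary: they have $\bar{y}_i=0$ and non-positive $\hat{y}_i-\hat{z}_{M_{I,j}(i)}$ by Observation~$(i)$, so they drop out immediately.
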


\begin{proof}
Let $(I,j)$ be a batch of color $c$ with $m$ integral segments 
that satisfies the conditions of the claim. For $i\in I$ define $M_{I,j}(i)$
to be the output time slot where $i$ is scheduled by $(I,j)$. For every item $i\in I$, let
$(\hat{y}_i+\bar{y}_i)-\hat{z}_{M_{I,j}(i)}$ be the contribution of $i$
to the pseudo-dual cost of $(I,j)$. 
Note that the sum of all contributions over $i\in I$ is exactly the left hand 
side of the claimed equation.
Let $F_p$ be the fractional segment between 
$I_{p-1}$ and $I_{p}$. We assume w.l.o.g that $I$ starts and ends with
an integral segment, as every fractional item has a negative contribution,
as we prove below.
Let $M_{alg}(i)$ be the time slot in which $i$ is removed from $B$.
Notice that for an item $i\in I_p$, $M_{alg}(i)=t_p$.
For every integral segment $I_p$, let $f_p$ be the time that the
first items of $I_p$ are ``defrosted" (i.e. they are moved from $B_c^{frz}$
to $B_c^{act}$).

We start with the following observations:\\
($i$) For every $i\in I$, $\hat{y}_i-\hat{z}_{M_{I,j}(i)}\le
-(\hat{z}_{M_{I,j}(i)}-\hat{z}_{M_{alg}(i)})$.\\
($ii$) For every $p$,
$\hat{z}_{t_p}-\hat{z}_{f_p} \ge 1/\ell_p$.\\
Observation ($i$) follows as $\hat{y}_i\le \hat{z}_{M_{alg}(i)}$.
Observation ($ii$) follows from  Case~5. When the integral block 
$I_p$ is scheduled at time slot $t_p$, the pseudo-dual cost 
$\hat{\sigma}_{I',j'}$ of some batch $(I',j')$ with $c(I') = c$ reaches $1$.
Notice that the pseudo-dual cost $\hat{\sigma }_{I',j'}$
is reset to $0$ when $I_p$ is ``defrosted" at time slot
$f_p$, and between $f_p$ and $t_p$ there are never more
than $\ell_p$ items of color $c$ in $B$. Therefore,
the rate at which $\hat{\sigma }_{I',j'}$ is raised in this
interval is at most $\ell_p\cdot d\mu$.
Therefore, $\mu$ increases by at least $1/\ell_p$ in between 
$f_p$ and $t_p$.

Let $m'$ be the maximum index for which $t_{m'}\le j$ 
(the lateset integral segment that was sheduled up to time $j$).
Let $j' = \max\{j,\max_{i\in I_{m'}} M_{alg}(i)\}$. (This is the
maximum between $j$ and the time slot where the algorithm 
removes the last item of $I_{m'}$.)  Notice that $\hat{z}_{j'} = \hat{z}_j$,
because if $j' > j$, the algorithm removes during the interval $[j,j']$
part of the integral block $I_{m'}$, and therefore the corresponding 
$\hat{z}$-s do not increase beyond their set value when the removal 
began at time slot $t_{m'}\le j$. 
We have that
\begin{eqnarray*}
& & \sum_{i\in I}(\hat{y}_i+\bar{y}_i) -\sum_{t=j}^{j+|I|-1}\hat{z}_t \\
& = & \sum_{i\in I:\ M_{alg}(i) \le j'} (\hat{y}_i+\bar{y}_i - \hat{z}_{M_{I,j}(i)}) +
\sum_{i\in I:\ M_{alg}(i) > j'} (\hat{y}_i+\bar{y}_i - \hat{z}_{M_{I,j}(i)}) \\
& = & \sum_{i\in I:\ M_{alg}(i) \le j'} (\hat{y}_i+\bar{y}_i - \hat{z}_{j'}) -
\sum_{i\in I:\ M_{alg}(i) \le j'} (\hat{z}_{M_{I,j}(i)} - \hat{z}_{j'}) + \\
& & + \sum_{i\in I:\ M_{alg}(i) > j'} \bar{y}_i - 
\sum_{i\in I:\ M_{alg}(i) > j'} (\hat{z}_{M_{I,j}(i)} - \hat{z}_{M_{alg}(i)}) \\
& \le & \sum_{p=1}^{m'}\sum_{i\in I_p} (\hat{y}_i+\bar{y}_i - \hat{z}_j) -
\sum_{i\in I:\ M_{alg}(i) \le j'} (\hat{z}_{M_{I,j}(i)} - \hat{z}_j) + \\
& & + \sum_{p=m'+1}^r \sum_{i\in I_p} \bar{y}_i - 
\sum_{i\in I:\ M_{alg}(i) > j'} (\hat{z}_{M_{I,j}(i)} - \hat{z}_{M_{alg}(i)}).
\end{eqnarray*}
The second equality follows from observation ($i$). The inequality is
explained as follows. From observation $(i)$ we have that
$\hat{y}_i - \hat{z}_{M_{I,j}(i)}\le 0$.
Therefore, if $M_{I,j}(i) \le j'$, then $\hat{y}_i - \hat{z}_{j'}\le 0$,
so we can ignore these terms for fractional segments. Furthermore,
for any $i$ in a fractional segment, $\bar{y}_i = 0$.

We upper-bound the above right-hand side as follows.\\
{\em First part:}\/ Here we bound
\begin{equation}\label{eq: first part}
\sum_{p=1}^{m'}\sum_{i\in I_p} \left(\hat{y}_i+\bar{y}_{i}-\hat{z}_j\right). 
\end{equation}
Notice that we bound the contribution of all the items before 
segment $I_{m'+1}$ assuming that they are all matched by the batch $(I,j)$
to time slot $j$. (In the next part we use the negative contribution that
each of these items $i$ accumulates between time slot $j$ and time
slot $M_{I,j}(i)$.)
Consider $i\in I_p$, $p< m'$. For every 
$p < r\le m'$, we have that $M_{alg}(i) < f_r$  and $j > t_r$.
Also notice that $t_{r-1} < f_r\le t_r$.
Therefore, using observation ($ii$) we have that
$\hat{z}_j-\hat{z}_{M_{alg}(i)} \ge \sum_{r=p+1}^{m'} \hat{z}_{t_r}-\hat{z}_{f_r}
\ge \sum_{r=p+1}^{m'}\frac{1}{\ell_r}$.
Moreover, for every such segment $I_p$
there are $\ell_p$ items $i$ each with a positive contribution
$\bar{y}_i = \frac{1}{2\ell_p}$ to~\eqref{eq: first part},
and at least $\ell_p$ items with a negative 
contribution 
$\hat{y}_i - \hat{z}_j\le \hat{z}_{M_{alg}(i)} - \hat{z}_j\le -\sum_{r=p+1}^{m'}\frac{1}{\ell_r}$
to~\eqref{eq: first part}.
The last integral segment $I_{m'}$ cobntributes at most $\frac 1 2$
to~\eqref{eq: first part}. Therefore,
\begin{equation}\label{eq: first part upper bound}
\sum_{p=1}^{m'}\sum_{i\in I_p} \left(\hat{y}_i+\bar{y}_{i}-\hat{z}_j\right)\le
\frac 1 2 +\sum_{p=1}^{m'-1}\ell_p\cdot\left(\frac{1}{2\ell_p}- \sum_{r=p+1}^{m'} \frac{1}{\ell_r}\right).
\end{equation}
Let $(1\le)\ p_1<p_2<\cdots<p_s\ (<m')$ be the indices of the segments
that have a positive contribution to the right-hand side of~\eqref{eq: first part upper bound}.
For every $u=1,2,\dots,s$, 
$\frac{1}{\ell_{p_u}} > 2\cdot \sum_{r=p_u+1}^{m'} 
\frac{1}{\ell_r}\ge 2\cdot\frac{1}{\ell_{p_{u+1}}}$.
Because for every $p$ it must be that $\frac{k}{100\ln k}\le \ell_p \le k$, 
we conclude that $s = O(\log \log k)$. Each segment $I_{p_u}$ contributes
to the right-hand side of~\eqref{eq: first part upper bound} at most $\frac 1 2$.
Therefore,
$$
\sum_{p=1}^{m'}\sum_{i\in I_p} \left(\hat{y}_i+\bar{y}_{i}-\hat{z}_j\right)
= O(\log \log k).
$$

{\em Second part:}\/ In this part we bound
\begin{equation}\label{eq: second part}
\sum_{p=m'+1}^m\sum_{i\in I_p} \bar{y}_{i} -
\sum_{i\in I:\ M_{alg}(i) \le j'} (\hat{z}_{M_{I,j}(i)} - \hat{z}_j) -
\sum_{i\in I:\ M_{alg}(i) > j'} (\hat{z}_{M_{I,j}(i)} - \hat{z}_{M_{alg}(i)}).
\end{equation}
We start by noticing that for every $p=m'+1,m'+2,\dots,m$
there are exactly $\Delta_p$ items that are matched by the
batch $(I,j)$ in the interval $[t_p,j_p-1]$ (by the fact that $p > m'$
we know that $t_p > j$). These items precede
in the input any item in $I_p$, and therefore they are
scheduled by the algorithm (completely) before time slot $f_p$.
In particular, for any such item $i$ we have $M_{alg}(i)\le f_p$.
Observation $(ii)$ implies that each such item accumulates
in the interval $[f_p,t_p]$ a negative contribution to~\eqref{eq: second part}
of at least $\frac{1}{\ell_p}$. Notice that the same items might
accumulate a negative contribution from several such intervals
for consecutive $p$-s. Further notice that if $p\ge m'+2$, then
$f_p > t_{p-1} > j$.
Using Claim~\ref{cl: delta_p} to bound $\Delta_p$, we get that
\begin{eqnarray*}
\lefteqn{\sum_{p=m'+1}^m\sum_{i\in I_p} \bar{y}_{i} -
\sum_{i\in I:\ M_{alg}(i) \le j'} (\hat{z}_{M_{I,j}(i)} - \hat{z}_j) -
\sum_{i\in I:\ M_{alg}(i) > j'} (\hat{z}_{M_{I,j}(i)} - \hat{z}_{M_{alg}(i)})}\\
&&\le 1 +  \sum_{p=m'+2}^{m-1}\left( \frac{1}{2}-\frac{\Delta_p}{\ell_p}\right)
\le 1 + \sum_{p=m'+2}^{m-1} \left(\frac{1}{2}-
\frac{1}{\ell_p}\cdot\sum_{r=p+1}^{m-1}\ell_r \right),
\end{eqnarray*}
where the extra term $1$ accounts for the contributions of $I_{m'+1}$ and $I_m$.
We therefore get that only integral segments $I_p$ such that
$\ell_p > 2\cdot \sum_{r=p+1}^{m-1}\ell_r$ can have 
a positive contibution. As $\frac{k}{100\ln k}\le \ell_p \le k$,
there are at most $O(\log \log k)$ such segments, and each
segment adds at most $\frac 1 2$ to right hand side of the above
inequallity.
\end{proof}

We are now ready to prove the main lemma.
\begin{lemma}\label{lm: (y,z) feasible}
The dual solution $(y,z)$ is a feasible solution of $\LP_{k'}$.
\end{lemma}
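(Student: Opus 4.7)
The feasibility of $(y, z)$, defined by scaling $(\hat{y} + \bar{y}, \hat{z})$ down by $\Theta(\log \log k)$, reduces to showing
\[
\sum_{i \in I}(\hat{y}_i + \bar{y}_i) - \sum_{j' = j}^{j+|I|-1} \hat{z}_{j'} = O(\log \log k)
\]
for every candidate dual constraint, indexed by a batch $(I, j)$. Claims~\ref{cl: fractional batch}, \ref{cl: batch before alg}, and \ref{cl: batch after alg} already give this (or a related structural) bound in three archetypal situations, and my plan is to reduce an arbitrary batch to these archetypes and sum the contributions.

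If $I$ is one fractional segment, Claim~\ref{cl: fractional batch} gives the bound directly. Otherwise, let $j^{\star}$ be the smallest slot in $\{j, \ldots, j + |I| - 1\}$ at which the item matched by $(I, j)$ is still in $B$; if no such slot exists, Claim~\ref{cl: batch after alg} applies to $(I, j)$ as a whole. In the remaining case, partition $I = I' \cup I''$, with $I'$ matched to $[j, j^{\star} - 1]$ and $I''$ matched to $[j^{\star}, j + |I| - 1]$. Since any contiguous color-$c(I)$ sub-range of $I$ inherits the defining property of a batch, both $(I', j)$ and $(I'', j^{\star})$ are valid batches. The prefix $(I', j)$ satisfies the hypothesis of Claim~\ref{cl: batch after alg} by construction, contributing $O(\log \log k)$, while the suffix $(I'', j^{\star})$ has its first item in $B$ at time $j^{\star}$, so Claim~\ref{cl: batch before alg} bounds its number of integral segments by $O(1)$.

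For the suffix $(I'', j^{\star})$ I sum over its segments. Each maximal fractional segment $F$ of $I''$ is itself a valid all-fractional batch $(F, j_F)$, and Claim~\ref{cl: fractional batch} bounds its contribution by $O(\log \log k)$. Each integral segment $I_p$ contributes $\sum_{i \in I_p} \bar{y}_i = 1/2$ from the fitted duals, plus a $\hat{y}$--$\hat{z}$ part bounded via Observation~(i) of Claim~\ref{cl: batch after alg}'s proof by $|I_p|\,(\hat{z}_{t_p} - \hat{z}_{j_p})$. By Claim~\ref{cl: delta_p}, $t_p < j_p$ for every integral segment of $I''$ except possibly the last, making this term non-positive there; the last integral segment is absorbed into a constant using Observation~(ii), in the style of the endpoint analysis in Claim~\ref{cl: batch after alg}'s proof. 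Summing the $O(1)$ segment contributions with the prefix bound yields the overall $O(\log \log k)$.

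The main obstacle is two-fold. First, Claim~\ref{cl: fractional batch} is stated for complete batches, but I apply it to fractional sub-batches of a larger batch; one has to verify that its argument is genuinely local to the fractional run, depending only on the primal update rule, Claim~\ref{cl: x-hat bounded}, and at most one regular reset. Second, the last integral segment of $(I'', j^{\star})$ falls outside Claim~\ref{cl: delta_p}'s reach, and absorbing its $\hat{y}$--$\hat{z}$ contribution into a constant requires a careful reuse of Observation~(ii) coupled with the $\bar{y}$ mass of that segment. Once these technicalities are handled, the decomposition and aggregation above establish the lemma.
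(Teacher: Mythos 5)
Your decomposition is essentially the paper's. The paper also splits at the first item $i\in I$ that is in $B$ at the output slot where $(I,j)$ matches it, forming a prefix sub-batch $(I_1,j)$ to which Claim~\ref{cl: batch after alg} applies directly and a suffix sub-batch $(I_2,j')$ with $j'=M_{I,j}(i)$. The suffix is bounded exactly as you propose: Claim~\ref{cl: batch before alg} caps the number of segments by an absolute constant, Claim~\ref{cl: fractional batch} charges each fractional segment $O(\log\log k)$, each integral segment contributes at most $\tfrac12$ to $\sum_{i\in I_2}\bar y_i$, and only the last integral segment can contribute positively to $\hat\sigma_{I_2,j'}$ (the paper argues this by noting that any integral block that contributes positively sweeps out the rest of $I$; you argue it via $\Delta_p>0$ from Claim~\ref{cl: delta_p}, which is the same mechanism). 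The two ``obstacles'' you flag --- locality of Claim~\ref{cl: fractional batch} for sub-batches and absorbing the last integral segment --- are indeed the parts the paper also handles implicitly, and your way of resolving them matches the intended reading, so there is no genuine gap.
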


\begin{proof}
Consider a dual constraint indexed $(I,j)$. 
We partion the pseudo-dual cost 
$\hat{\sigma}_{I,j} + \sum_{i\in I} \bar{y}_i$
of $(I,j)$
into two parts. Let $i\in I$ be the first item for which $i\in B$ at the
time it is matched by $(I,j)$. Partition $(I,j)$ into two 
sub-batches $(I_1,j)$, $(I_2,j')$ 
such that $I_1$ contains 
all the items in $I$ smaller than $i$, $I_2$ contains 
the rest of $I$'s items, and $j'=M_{I,j}(i)$.
From Claim~\ref{cl: batch after alg} the pseudo-dual cost of 
$(I_1,j)$ is $O(\log\log k)$. Any integral segment 
of $(I_2,j')$ contributes 
exactly $\frac 1 2$ to $\sum_{i\in I_2} \bar{y}_i$. Only 
the last integral segment can have a positive contribution 
to $\hat{\sigma}_{I_2,j'}$, as any integral block with positive 
contribution to a batch $(I,j)$ evicts all the remaining 
items of $I$. Any fractional segment contributes at most 
$O(\log \log k)$ to $\hat{\sigma}_{I_2,j'}$, by 
Claim~\ref{cl: fractional batch}. Therefore, as the total
number of segments in $(I_2,j')$ is bounded by an
absolute constant
(Claim~\ref{cl: batch before alg}), the total pseudo-dual 
cost of $(I_2,j')$ is also $O(\log \log k)$.
We therefore conclude that the dual solution $(y,z)$,
derived by scaling down $(\hat{y}+\bar{y},\hat{z})$ 
by an appropriate factor of $O(\log \log k)$, is feasible.
\end{proof}

\subsection{Bounding the primal cost}

Here we bound the cost of the primal solution using the
cost of the dual solution.
\begin{lemma}\label{lm: primal cost}
At the end, 
$$
\sum_{(I,j)} x_{I,j} = O(1)\cdot
\left(\sum_{i=1}^{n} \hat{y}_i + \sum_{i=1}^{n} \bar{y}_i - 
\sum_{j=k'+1}^{k'+n} \hat{z}_j\right).
$$
\end{lemma}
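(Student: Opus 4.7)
The plan is to decompose the primal cost $\sum_{(I,j)} x_{I,j} = P_{\text{reg}} + P_3 + P_5 + P_6$ by which phase of the algorithm produced each batch, and to bound each piece separately against the right-hand side.

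The easy amortizations handle $P_5$ and $P_6$ directly. Each Case~5 event schedules one integral batch of weight $1$ and sets $\bar y_i = \frac{1}{2|B_c^{act}|}$ on each of the $|B_c^{act}|$ items, adding exactly $\frac{1}{2}$ to $\sum_i \bar y_i$; hence $P_5 \le 2\sum_i \bar y_i$. Case~6 only fires when at least $\frac{1}{10}$ units of color-$c$ regular-execution fractionally-scheduled weight have ended since the last Case~6 event on color $c$, so $P_6 \le 10 P_{\text{reg}}$. For Case~3, each invocation completes the eviction of $B_c^{act}$ (covered by a single weight-$1$ batch, possibly split across a few output slots because of the per-slot capacity constraint, so the added $x$-weight per event is $O(1)$) and turns the $|B_c^{frz}| \ge k/(100\ln k)$ frozen items into integral status. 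Since integral items can exit the buffer only via a subsequent Case~5 integral batch on the same color, each Case~3 can be charged to the next such Case~5 event, yielding $P_3 = O(\sum_i \bar y_i)$.

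The core of the proof is bounding $P_{\text{reg}} = O(\sum_i \hat y_i - \sum_j \hat z_j)$ by comparing growth rates at each $\mu$. The primal rate is $\sum_{c\text{ fractional}} \max_{(I,j):c(I)=c} \frac{d\hat x_{I,j}}{d\mu}$. By the multiplicative-weights rule together with Claim~\ref{cl: x-hat bounded}, each $\frac{d\hat x_{I,j}}{d\mu}$ is at most $\frac{11}{10}\cdot\frac{d\hat\sigma_{I,j}}{d\mu}$, and $\frac{d\hat\sigma_{I,j}}{d\mu}$ counts items of $B_c^{act}\cap I$ matched by $(I,j)$ before the current $t$. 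The dual rate equals $|B(\mu)| - N(\mu)$, where $N(\mu)$ is the number of active $\hat z$-variables in the range $[k'+1,k'+n]$. The slack required to dominate the primal rate by $O(1)$ times the dual rate comes from Claim~\ref{cl: scheduled volume}: the scheduled volume of $B$-items lies below $|B|-k'$, which (combined with the resource-augmentation gap $k-k'=2k/\ln k$ built into the choice of $k'$) leaves enough unscheduled weight to convert into dual growth.

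The main obstacle will be converting this pointwise rate comparison into an integrated bound. The dual rate $|B|-N$ can be transiently negative when $t$ is far below $k'+n$, so a naive pointwise argument does not suffice. A clean proof will likely need an auxiliary potential (for instance, involving the total unscheduled volume of $B$-items, or the aggregated $\hat x$-values of the currently relevant batches), together with a careful choice of integration variable ($\mu$, slot index $t$, or per-batch), so that the Claim~\ref{cl: scheduled volume} slack is converted across the full execution into guaranteed net dual growth that dominates the accumulated regular-execution primal cost by an $O(1)$ factor.
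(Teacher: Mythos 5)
Your decomposition into $P_{\text{reg}} + P_3 + P_5 + P_6$ matches the paper's structure, and the bounds on $P_3$, $P_5$, $P_6$ (charge Case~3 to the ensuing Case~5, charge Case~5 to the $\tfrac12$ increase in $\sum\bar y_i$, charge Case~6 to a $\tfrac{1}{10}$-fraction of ended regular batches) are exactly the paper's Parts~2 and~3, up to unimportant constants. The problem is in the core bound on $P_{\text{reg}}$.

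There is a genuine gap, and it stems from a miscount of the dual rate. You write the dual rate as $|B(\mu)| - N(\mu)$ where $N(\mu)$ is the number of active $\hat z$-variables, and you worry this can go negative for $t$ far below $k'+n$. But the algorithm also raises $\hat y_i$ for every \emph{future} item $i \ge t$ (not just items in $B$); the paper explicitly notes this ("This raises also future $\hat y_i$-s ..."), and these are part of the dual objective $\sum_{i=1}^n \hat y_i$. Accounting for them, the dual rate is
$$
|B| + (n - t + 1) - (k' + n - t + 1) = |B| - k',
$$
which by Corollary~\ref{cor: volume beyond t-1} is always at least $k - \tfrac{12k}{100\ln k} - k' > 0$. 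So the pointwise rate comparison works directly and the potential-function machinery you anticipate needing is unnecessary. More importantly, you never actually establish the rate domination: the hard part of the paper's proof is showing $\sum_{c:\hat\sigma_{I_c,j_c}\ge 1} |B_c^{act}\cap I_c|\cdot\hat x_{I_c,j_c} \le O(1)\sum_{(I,j)}x_{I,j}|B\cap I|$, which requires the three-case analysis keyed to the regular-reset time $\mu_0(I_c,j_c)$ (before the reset, between the reset and when $f$ exits the buffer, and after); only then does Claim~\ref{cl: scheduled volume} close the argument by bounding $\sum x_{I,j}|B\cap I| < |B| - k'$. Your proposal names the endpoints of this chain but leaves the crucial middle link -- the reset-based case analysis, which is the whole reason the regular reset exists in the algorithm -- as an unresolved "main obstacle."
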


\begin{proof}
We partition the primal cost of the algorithm into three parts,
according to the reason for incurring the cost.

{\em Part 1 (regular execution):}\/
Consider an increase $d\mu$ in $\mu$ during regular
execution, and let $t$ be the current output time slot.
By the definition of the algorithm, the dual variables that 
are raised at time $\mu$ 
are $\left\{\hat{y}_i\right\}_{i\in B}$, $\left\{\hat{y}_i\right\}_{ t\le i\le n}$, and
$\left\{\hat{z}_j\right\}_{t\le j \le k'+n}$. Therefore,
$$
\frac{d\left(\sum_{i=1}^{n} \hat{y}_i 
- \sum_{j=k'+1}^{k'+n} \hat{z}_j\right)}{d\mu} =
|B|+n-t+1 -(k'+n-t+1)=|B|-k'.
$$
Let $x$, $\hat{x}$ be the (partial) primal and pseudo-primal 
solutions at time $\mu$.
For every color $c$ for which $B_c^{act}\ne\emptyset$ is
fractional, let $(I_c,j_c)$ be the batch of color $c$
that maximizes $\frac{d\hat{x}_{I,j}}{d\mu}$ (i.e., at time
$\mu$ the rate of increase of $\hat{x}_{I,j}$ dictates the
rate at which color $c$ is removed from the buffer).
Notice that during regular execusion, if $x_{I,j}$ increases
then $B_{c(I)}^{act}$ must be fractional. 
Thus, by definition,
\begin{eqnarray*}
\frac{d\sum_{(I,j)} x_{I,j}}{d\mu}
&=&\sum_{\hbox{\tiny fractional } c} \frac{d\hat{x}_{I_c,j_c}}{d\mu}\\
&=&\sum_{c:\ \hat{\sigma}_{I_c,j_c}<1} 
\frac{1}{\ln k }\cdot\frac{d\hat{\sigma}_{I_c,j_c}}{d\mu}
+\sum_{c:\ \hat{\sigma}_{I_c,j_c}\ge 1} 
\frac{d\hat{\sigma}_{I_c,j_c}}{d\mu}\cdot \hat{x}_{I_c,j_c}\\
&\le&\sum_{c:\ \hat{\sigma}_{I_c,j_c}<1} \frac{|B_c^{act}\cap I_c|}{\ln k}
+\sum_{c:\ \hat{\sigma}_{I_c,j_c}\ge 1}|B_c^{act}\cap I_c|\cdot \hat{x}_{I_c,j_c},
\end{eqnarray*}
We bound the first term as follows:
$$
\sum_{c:\ \hat{\sigma}_{I_c,j_c}<1} \frac{|B_c^{act}\cap I_c|}{\ln k}\le 
\frac{|B|}{\ln k}\le |B|-k'.
$$
The last inequality follows from the fact that the volume in the buffer
of items in $B$ is at least $k-\frac{12k}{100\ln k}$ (an immediate
consequence of Corollary~\ref{cor: volume beyond t-1}).
Because the volume in the buffer of items in $B$ 
is at least $k-\frac{12k}{100\ln k}$
also $|B|\ge k-\frac{12k}{100\ln k}$. So,
\begin{eqnarray*}
|B|-k'&=& |B|-(k-\frac{12k}{100\ln k})+(k-\frac{12k}{100\ln k})-(k-\frac{2k}{\ln k})\\
&>&|B|-(k-\frac{12k}{100\ln k}) +\frac{k}{\ln k}\\
&>& \frac{|B|-(\frac{12k}{100\ln k}) +k}{\ln k}\\
&>& \frac{|B|}{\ln k}.
\end{eqnarray*}

We now bound the second term
$\sum_{c: \hat{\sigma}_{I_c,j_c}\ge 1}|B_c^{act}\cap I_c|\cdot \hat{x}_{I_c,j_c}$.
We show that for every color $c$,
\begin{equation}\label{eq: second term}
|B_c^{act}\cap I_c|\cdot \hat{x}_{I_c,j_c}\le 
O(1)\cdot\sum_{(I,j)} \left(x_{I,j}\cdot |B_c\cap I|\right).
\end{equation}
The difficulty in proving this is the following. Any increase in
$\hat{x}_{I_c,j_c}$ lower bounds the increase in $x_{I,j}$, if
the batch $(I,j)$ is relevant to $(I_c,j_c)$ at that time. In this
case, it is possible to extend the scheduled batch $(I,j)$ to
include all the items in $B_c\cap I_c$. However, the batch
might terminate before removing all those items because it
reaches an item that is in $B_c^{frz}$ at the time it needs to
be scheduled. The regular reset of $\hat{x}_{I_c,j_c}$ takes
care of this problem, as we show below.

In order to show Inequality~\eqref{eq: second term}, we consider 
three cases, according to the current value of $\mu$. The first case 
is when $\mu < \mu_0(I_c,j_c)$ (i.e., before $\hat{x}_{I_c,j_c}$
experiences a regular reset). Notice that every batch $(I,j)$ that 
is relevant to $(I_c,j_c)$ and is scheduled starting at some time
$\mu'\le\mu$, removes more than half of the items in $B_c(\mu)$,
because at the current $\mu$ less than half of $B_c(\mu)$ arrived 
after the first item $f=f(I_c,j_c)$ that causes any interruption 
in a relevant scheduled batch. In this case,
\begin{eqnarray*}
          |B_c^{act}(\mu)\cap I_c|\cdot \hat{x}_{I_c,j_c}
&\le& |B_c(\mu)|\cdot \hat{x}_{I_c,j_c} \\
&\le& |B_c(\mu)|\cdot\sum_{\hbox{\scriptsize relevent } (I,j)} x_{I,j}\\
&\le& 2\cdot\sum_{(I,j)} |B_c(\mu)\cap I| \cdot x_{I,j}.
\end{eqnarray*}

The second case is when $\mu_0(I_c,j_c)\le \mu < \mu_1(I_c,j_c)$,
where $\mu_1(I_c,j_c)$ is the time at which $f$
is scheduled to be removed completely from the buffer.
Notice that $\hat{x}_{I_c,j_c}$ is reset at $\mu_0 = \mu_0(I_c,j_c)$, 
so $\hat{x}_{I_c,j_c}\le\sum_{(I,j)} x_{I,j}$, where the sum is taken
over relevant $(I,j)$ that are scheduled after time $\mu_0$.
Consider such $(I,j)$. 
If $(I,j)$ is never interrupted (something that
might happen if $B_c^{frz}\ne\emptyset$ at the time we reach
the end of $I$), then clearly $B_c^{act}(\mu)\cap I_c\subseteq |B_c(\mu)\cap I|$.
Otherwise, let $\mu' = \mu'_{I,j}$ denote any point in the time
interval where $(I,j)$ was scheduled (the sets don't change during that
interval). Less than half the items in $B_c(\mu_0)$ arrived 
before $f$, so this remains true also for $B_c(\mu')$. 
As $\mu < \mu_1(I_c,j_c)$, we have that $f\in B_c(\mu)$.
Set $\mu'' = \mu''_{I,j}$ to be the minimum time in $[\mu',\mu]$ when
$B_c^{frz}(\mu'')\ne\emptyset$. If no such time exists, set
$\mu'' = \mu$. Clearly, $|B_c^{act}(\mu'')| > \frac{10}{11} |B_c(\mu'')|$.
Let $F = \{f,f+1,f+2,\dots,n\}$ (i.e., the input items starting
with $f$). Now, $|B_c(\mu'')\cap F|\ge \frac 1 2\cdot |B_c(\mu'')|$,
so 
$$|B_c^{act}(\mu'')\cap F|\ge\left(\frac 1 2 - \frac{1}{11}\right)\cdot |B_c(\mu'')|
> \frac 2 5\cdot |B_c(\mu'')|.
$$
Clearly, $(I,j)$ schedules all of $B_c^{act}(\mu'')$. 
Notice that 
$B_c^{act}(\mu'')\cap F\subseteq B_c(\mu)$ and
$|B_c(\mu)| < \frac{12}{10}\cdot |B_c(\mu'')|$.
Combining everything together,
\begin{eqnarray*}
            |B_c^{act}(\mu)\cap I_c|\cdot \hat{x}_{I_c,j_c} 
& \le & |B_c(\mu)|\cdot \hat{x}_{I_c,j_c} \\
& \le & \sum_{(I,j)} |B_c(\mu)|\cdot x_{I,j} \\
&  <  & \sum_{(I,j)}\frac{12}{10}\cdot |B_c(\mu''_{I,j})|\cdot x_{I,j} \\
& \le & \sum_{(I,j)} 3\cdot |B_c^{act}(\mu''_{I,j})\cap F|\cdot x_{I,j} \\ 
& \le & 3\cdot\sum_{(I,j)} |B_c(\mu)\cap I|\cdot x_{I,j}.
\end{eqnarray*}

The last case is when $\mu\ge\mu_1(I_c,j_c)$. In this
case, consider all the scheduled batches that include $f$.
Their total weight is $1$, and they've all been scheduled
before the current $\mu$. Because $f$ interrupted a
relevant batch, all these batches must be relevant. A weight
of less than $\frac{1}{10}$ of these batches is interrupted
before time $\mu$, otherwise we would have executed
Case~6, removing all the remaining items of $I_c$. This 
contradicts the definition of $(I_c,j_c)$ as the batch that 
currently, at time $\mu$, controls the rate at which color
$c$ is evicted from the buffer. Thus, a weight of at least $\frac{9}{10}$
of the scheduled batches that include $f$ schedules at time 
$\mu$ all the items in $B_c^{act}(\mu)\cap I_c$. On the other 
hand, by Claim~\ref{cl: x-hat bounded}, 
$\hat{x}_{I_c,j_c}\le\frac{11}{10}$. Therefore,
$$
|B_c^{act}(\mu)\cap I_c|\cdot \hat{x}_{I_c,j_c}\le
\frac{11}{9}\cdot\sum_{(I,j)} |B_c^{act}(\mu)\cap I|\cdot x_{I,j}
\le \frac{11}{9}\cdot\sum_{(I,j)} |B_c(\mu)\cap I|\cdot x_{I,j}.
$$
Therefore, regardless of the value of the current time $\mu$,
$$
\sum_{c: \hat{\sigma}_{I_c,j_c}\ge 1} |B_c^{act}\cap I_c|\cdot \hat{x}_{I_c,j_c} \le
3\cdot\sum_{c: \hat{\sigma}_{I_c,j_c}\ge 1}\sum_{(I,j)} |B_c\cap I| \cdot x_{I,j}
\le 3\cdot (|B| - k'),
$$
where the last inequality follows from Claim~\ref{cl: scheduled volume}.
Thus, summing the bounds on the two terms,
$$
\frac{d\sum_{(I,j)} x_{I,j}}{d\mu}\le 4\cdot (|B| - k')\le 4\cdot
\frac{d\left(\sum_{i=1}^{n} \hat{y}_i 
- \sum_{j=k'+1}^{k'+n} \hat{z}_j\right)}{d\mu},
$$
which implies trivially that the total primal cost due to 
regular execution of the algorithm is at most $4$ times 
the dual cost.

{\em Part 2 (Case 3 and Case 5 execution):}\/
Each time an integral block is evicted (Case~5),
$\sum_{i=1}^{n} \bar{y}_i$ is raised by $\frac{1}{2}$.
Preceding each such eviction there is a specific Case~3 execution, 
 when the block became integral. 
These Case~3 and Case~5 executions incur together
a primal cost of at most $3$. (Case~3 evicts a color from
the buffer at a cost of at most $1$. Case~5 schedules an
intergal block, and may suspend fractional batches of
toal weight $1$. So the cost of Case~5 is at most $2$.)
Therefore, the total primal increment due
to Case~3 and Case~5 is at most $6\cdot\sum_{i=1}^{n} \bar{y}_i$.

{\em Part 3 (Case 6 execution):}\/ Case~6 costs at most $2$ (just like
Case~5). Each time we execute Case~6 on color $c$, we've moved past
the end of regular execution fractional scheduled batches of color $c$ with
total weight at least $\frac {1} {10}$. After the end of this eviction,
$B$ does not contain any color $c$ items, therefore the next Case~6
execution is due to distinct fractional scheduled batches.
Therefore the primal increase as a result of Case~6 is at most
$20$ times the primal increase due to regular executions.
By the above analysis of regular execution, this incurs a cost of 
at most $80$ times the dual cost.
\end{proof}

\section{Online Rounding}\label{sec: online rounding}

In this section we give a randomized online algorithm that
rounds the fractional solution $x$ to an integral solution
for the reordering buffer management problem.
The rounding algorithm presented here is inspired
by our deterministic offline rounding algorithm in~\cite{AR13}.
Here we use randomness to replace the knowledge of future
input that is needed in~\cite{AR13}. 
At each
step $t$ where our rounding algorithm needs to choose a
color to evict, it uses only the input up to time $t$ and the 
fractional solution $x$ that we computed up to time $t$. 
Thus, our randomized online algorithm for reordering buffer 
management repeats two alternating steps: ($i$) Extend the 
fractional solution deterministically up to the current time. 
($ii$) Evict from the buffer using randomness some items 
chosen based on the current partial fractional solution. This 
increments the current time to the next vacant output slot.


\subsection{The rounding algorithm}

The algorithm works in phases. The first phase begins at
time $k+1$. In the beginning of a phase, the algorithm
chooses one or more color blocks to evict, based on the 
fractional
solution $x$ that was computed up to the output time
slot $t$ where the phase begins. Then, the algorithm 
evicts the chosen
blocks, and a new phase begins. Notice that
in order to execute the next phase, we need to extend
the fractional solution $x$ to the new time slot that we 
have reached, taking into account the new input items 
that have entered the buffer during the last phase.

In choosing the colors to evict in a phase, we consider
four cases. Let $\delta > 0$ be a sufficiently small constant, 
and let $t_0$ be the starting output time slot of the current 
phase. More precisely, the fractional solution computed so
far fully uses the time slots up to at least $t_0$, whereas the 
integral solution computed so far extends up to time slot 
$t_0 - 1$.

{\em Case 1:}\/ The buffer contains an item from which 
the fractional solution removed so far a weight of at least 
$\delta$. We evict the color block of this item.

{\em Case 2:}\/ The total weight of the items that the
fractional solution schedules in the time slot $t_0$ 
and are also in our buffer is at least 
$2\delta$. We choose one such item at random with 
probability proportional to the weight it is removed at
time $t_0$, and we evict its block.

{\em Case 3:}\/ A weight of more than $\frac 1 2$ of
the items that the fractional solution schedules at time
$t_0$ belong to a single color $c$ that we just evicted
from our buffer (i.e., the integral solution evicts at time 
slot $t_0 - 1$ an item of color $c$). 
In this case we first choose color
blocks to evict according to the following procedure, 
and then we evict all these blocks in arbitrary order.

We now describe the procedure for choosing color
blocks to evict in Case~3. Besides choosing color
blocks, the procedure also ``locks" some volume
fractionally scheduled before time $t_0$. Any volume
that is fractionally scheduled starts unlocked. Locked
volume is assigned to a specific evicted block, and
when the weight in the fractional buffer of an item 
in this block drops below $1 - \delta$, the volume
assigned to this block becomes unlocked again.
 
We partition the colors into classes according to the
number of items in the buffer of each color at time $t_0$.
A color $c$ is in class $s=1,2,\dots,\log k + 1$ iff the
number of items in the buffer of color $c$ is in $[2^{s-1},2^s)$.
Next, we partition the classes into subclasses as follows.
For every color $c$ let $w_c$ denote the average over the 
color $c$ items in the buffer of the unlocked volume
that the fractional solution scheduled for this item
before time $t_0$. Let $W_s$ denote the sum of $w_c$
over all colors $c$ in class $s$. To construct a subclass, 
we collect
blocks until their total $w_c$ weight exceeds $\delta$.
In a class, we construct disjoint subclasses using this
process while the remaining weight is at least $\delta$.
Notice that because $w_c < \delta$ for every color $c$,
the total weight of a subclass is in $[\delta,2\delta)$.
Also notice that in each class we might have colors with
total $w_c$ weight of less than $\delta$ that are not
assigned to subclasses. We ignore those colors. If 
$W_s < \delta$ then no block of class $s$ is chosen.
In each subclass, we choose at random one color block 
to evict. The probability of choosing a color $c$ is 
proportional to $w_c$. The chosen block locks all
the unlocked volume in the subclass. Finally, we also 
choose the largest color block in our buffer (This block 
takes care of the excess weight that we ignored in the 
above choice.)

{\em Case 4:}\/ If all else fails (i.e., for all previous
cases, the conditions for executing the case do not
hold), we choose the largest
and second largest color blocks, and also apply the
Case~3 procedure that chooses more colors. If after 
evicting the largest or second largest color block one 
of the other cases applies, 
we terminate the phase without evicting the remaining
chosen blocks. (If we don't get to evict the Case~3
procedure choices, we annul the locks generated by 
the choice.) We stress that we choose all the blocks to 
evict in this case according to the situation at time $t_0$,
but some of the chosen blocks might end up not being
evicted.

\subsection{Performance guarantees}

We show that the cost of the integral solution generated
by the rounding algorithm is within a factor of $O(1)$ of
the cost of the fractional solution generated by the
primal-dual algorithm. The main idea of the proof is the
following. Evicting a color block increases the cost of the
integral solution by $1$, and we would like to change this
cost against an increase by some (small) constant of the
cost of the fractional solution. The blocks evicted due to 
the procedure in Case~3, excluding the eviction of the largest 
block, are handled separately (see Claim~\ref{cl: case 3 procedure}).
All the remaining evictions amount to a constant number of
blocks evicted per phase. We show that for an expected
constant fraction of the phases, we can find batches that
were scheduled by the fractional solution with the following
properties: ($i$) These batches do not stretch beyond the time
slot reached by the integral solution in the corresponding phase.
($ii$) Their total weight is at least $\delta$. ($iii$) They were
not selected more than once in previous phases (excluding the 
charging of the Case~3 procedure). This, together with the
Case~3 charging scheme, implies the following guarantee.
\begin{lemma}\label{lm: rounding main}
The expected cost of the solution generated by the
rounding algorithm is $O(1)\cdot \sum_{I,j} x_{I,j}$,
where $x$ is the primal solution generated by the
primal-dual algorithm.
\end{lemma}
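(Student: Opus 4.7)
The plan is to partition the integral eviction cost into two types and charge each to a disjoint portion of the fractional LP cost $\sum_{I,j} x_{I,j}$. The first type consists of the blocks chosen by the subclass-selection procedure inside Case~3 (or inside Case~4), excluding the single largest block. For these I would use the locking mechanism: each subclass has total $w_c$-weight in $[\delta,2\delta)$, and when a block is chosen it locks the unlocked fractional volume of its subclass. Because unlocked volume becomes available again only when some item's fractional weight drops below $1-\delta$, each unit of unlocked fractional volume is charged at most once, yielding a bound of $O(1/\delta)$ times the consumed unlocked volume; summed over all phases this is $O(1)\cdot\sum_{I,j} x_{I,j}$. This is the content of the alluded-to claim on the Case~3 procedure.

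The second type consists of the ``main'' evictions: the $O(1)$ blocks per phase coming from Case~1, Case~2, the largest block in Case~3, and the top two blocks in Case~4. For these it suffices to exhibit, per phase, fractional weight $\Omega(1)$ (in expectation over the algorithm's coin flips) that is fully scheduled by time $t_0$ and is not charged by any other phase; combined with linearity of expectation, this gives a total expected cost of $O(1)\cdot\sum_{I,j} x_{I,j}$ for the main evictions.

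The case analysis for this second part proceeds as follows. In Case~1, the triggering item has fractional removed weight $\ge \delta$; the batches achieving this removal end by $t_0$ and, once the item is evicted, cannot trigger Case~1 again, so the charge is disjoint across phases. In Case~2, an item is sampled with probability proportional to its weight scheduled at $t_0$, conditional on the total in-buffer scheduled weight at $t_0$ being $\ge 2\delta$; in expectation the batches currently removing the chosen item carry weight $\Omega(\delta)$. In Case~3, the triggering color contributes more than $\tfrac{1}{2}$ of the fractional weight at $t_0$, and its batches crossing $t_0$ supply the charge to the largest-block eviction. Case~4 is the subtlest: precisely because none of Cases~1--3 apply, more than $1 - 2\delta$ of the fractional weight at $t_0$ comes from items no longer in the integral buffer, so the fractional solution still owes substantial ``slack'' weight at $t_0$ on items only it is tracking; this slack, associated with batches that conclude within the current phase, pays for the two largest-block evictions.

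The main obstacle will be verifying disjointness of the charges across phases, especially for Case~4, where one must argue that the slack fractional volume at $t_0$ was not committed to an earlier phase. I would handle this by associating each fractional batch $(I,j)$ with the unique phase containing its completion slot $j + |I| - 1$, and arguing that each case charges only batches completing in the current phase. A secondary technicality is the randomness in Cases~2 and~3: the charge bound holds only in expectation, but since the lemma itself is stated in expected cost this is absorbed by linearity of expectation. Combining the bounds on both types of evictions then yields $\EE[\text{integral cost}] = O(1)\cdot \sum_{I,j} x_{I,j}$ as required.
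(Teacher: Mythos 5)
Your decomposition into ``subclass-chosen blocks charged via locking'' versus ``main $O(1)$ blocks per phase'' matches the paper at a high level, and the Case~1 and Case~2 charges are essentially right. But the proposed charging for Cases~3 and~4 has a genuine gap, and the overall accounting structure you propose does not work.

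The key difficulty is that in Cases~3 and~4 there may be \emph{no} fractional weight of magnitude $\Omega(1)$ that completes within the current phase and is available to charge. In Case~3 the batches crossing $t_0$ that make up the dominant color are scheduling items your buffer has \emph{already evicted}; those batches typically stretch far beyond $t_0$ and need not complete in the phase, so associating each batch with the phase containing $j+|I|-1$ does not make them chargeable here. Likewise in Case~4, the ``slack'' fractional weight at $t_0$ on already-evicted items is carried by batches that can extend arbitrarily far past $t_0$, so it cannot in general pay for the two largest-block evictions of the current phase. The paper's proof does not attempt to charge every phase. Instead it proves a phase-chaining property: a Case~4 phase, if not charged, is followed by a Case~1, 2, or~3 phase; a Case~3 phase, if not charged, is followed by a Case~1 or~2 phase; a Case~2 phase is charged with probability at least $\tfrac12$; a Case~1 phase is always charged. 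The worst chain is Case~4 $\to$ Case~3 $\to$ Case~2, giving at least $\tfrac16$ of the phases charged in expectation. This amortization across consecutive phases, together with the $t_1$ definition (the earliest time by which weight $\ge\delta$ of the batches through $t_0$ has ended, used to guarantee the charged weight actually completes within the phase), is the part your proposal is missing. Within Case~4 the paper also performs a careful geometric analysis using $\Delta$ (the symmetric buffer difference) and $t'$ to establish the alternative ``either we reach $t_1$, or the next phase falls into an earlier case,'' which is what makes the chaining work and which a per-phase ``slack'' charge does not substitute for.
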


\begin{proof}
We consider the four cases that define a phase that
begins at time $t_0$. In the first two cases our 
charging scheme is easy to achieve. In Case~1, 
an item $i$ that is evicted in this phase is scheduled 
in the fractional solution before time $t_0$ in batches 
of total weight at least $\delta$. Because each such 
batch matches $i$ to an output slot before $t_0$, all 
of these batches end before the end of the current 
eviction of $c(i)$. So we charge this phase to the cost
of $\ge\delta$ of those batches.

In Case~2, consider the fractionally scheduled
batches with an item scheduled at time $t_0$. Let $t_1$ 
be the earliest time when the subset of these batches 
that have ended by time $t_1$ has total weight of at 
least $\delta$. Thus, the weight of the subset of these 
batches that reaches time $t_1$ is at least $1 - \delta$. 
Consider the subset of batches that schedule at 
time $t_0$ an item that is in our buffer at that time. 
This subset has total weight $w\ge 2\delta$. Therefore, 
the total weight of batches that schedule at time $t_0$ 
an item in our buffer and also reach time $t_1$ is at least 
$w - \delta\ge \delta$. The probability that the algorithm 
chooses to evict a color block of one of these batches
is at least $1 - \frac{\delta}{w}\ge\frac 1 2$.
If this event happens, the current phase ends past $t_1$, and
we charge the phase to the weight of at least $\delta$ 
of batches that use $t_0$ but end at or before $t_1$.
If our choice is unsuccessful, we don't charge the phase. 
This happens with probability at most $\frac 1 2$.

If we execute neither Case~1 nor Case~2, then for
every item in our buffer, the weight of this item that the 
fractional solution scheduled before time $t_0$ 
is less than $\delta$. Also, a weight of more than 
$1 - 2\delta$ scheduled by the fractional solution at time 
$t_0$ is of items no longer in our buffer. Notice that these items
must have appeared in the input prior to their removal, so 
we've already placed them in the buffer and evicted them in 
the past.

By the definition of $t_1$, it's still true that in the fractional 
solution the total weight of batches whose schedule contains
the interval $[t_0,t_1]$ is at least $1 - \delta$.
Let $\Delta$ denote the total volume of the content difference 
between our buffer and the fractional buffer (i.e., of the items 
in our buffer the fractional buffer lacks a total volume of $\Delta$,
and symmetrically of the contents of the fractional buffer, a
total volume of $\Delta$ belongs to items we no longer hold). 
Let $t' > t_0$ denote the earliest time where at least a weight 
of $2\delta$ of the fractionally scheduled batches that reach 
$t_1$ schedule an item that arrived at time $t_0$ or later (i.e., 
items we haven't seen yet).

Assume for now that Case~3 does not hold. If our buffer at 
time $t_0$ contains one or two colors that together have more
than $t' - t_0$ items, then the eviction of the two colors chosen 
in the first step of Case~4 makes us reach $t'$. Notice that we
reach $t'$ just by removing the items of these colors that are
already in our buffer at $t_0 - 1$. However, as we evict each
color, additional items of this color that enter the buffer might 
be appended. If we reach $t_1$,
we can charge this phase as in Case~2. If we haven't reached 
$t_1$, then Case~2 now applies for the following reason: our 
buffer at time $t_0$ contains more than $t'-t_0$ items that are 
evicted. We advance beyond $t'$ by at least the number of items
that arrived after time $t_0$ that we remove. Thus, if we haven't 
reached $t_1$, there is still a weight of at least $2\delta$ of 
fractionally scheduled batches stretching to $t_1$ with the current
item in our buffer. Therefore, the next phase will be charged with
probability at least $\frac 1 2$ (because it executes either Case~1
or Case~2). We do not charge this phase.

So let's assume that there are no such colors. Let $\gamma > 0$
be a sufficiently large constant. Suppose that
$\Delta < (t' - t_0) / \gamma$. By our assumptions, between
$t_0$ and $t'$ there is a total volume 
$> (1 - 3\delta)\cdot (t' - t_0)$ that the fractional solution
schedules of items that arrived before time $t_0$. This is
because at most $2\delta (t' - t_0)$ of the volume $t'-t_0$
belongs to items arriving past $t_0$ in fractionally scheduled
batches that reach $t_1$, and another at most $\delta (t' - t_0)$
belongs to batches that don't reach $t_1$
(regardless of when their items arrived). Of this volume, more 
than $(1 - 3\delta - 1/\gamma)\cdot (t'-t_0)$ must still be in 
our buffer at time $t_0$. Consider the fractionally scheduled
batches of total weight 
at least $1 - \delta$ whose schedule contains the interval
$[t_0,t_1]$ (which includes $t'$). At least
$\frac 3 4 - \delta$ of this weight belongs to batches that 
begin with no more than $4(t'-t_0)/\gamma$ items no longer in 
our buffer at time $t_0$. Otherwise, the total volume of items
that are no longer in our buffer but are still in the fractional
buffer is 
$> \frac 1 4\cdot 4(t'-t_0)/\gamma = (t'-t_0)/\gamma >\Delta$,
a contradiction to our assumptions.

Consider these
batches of total weight at least $\frac 3 4 - \delta$. In the
interval $[t_0+4(t'-t_0)/\gamma,t']$, they contain only items
that are either in our buffer at time $t_0$ or arrive past $t_0$.
But only less than $2\delta$ of this weight belongs to batches
that contain, up to time $t'$, any item that arrives past $t_0$
(as their schedule all reach $t_1$). So there's a weight of at least
$\frac 3 4 - 3\delta$ of these batches that in the interval
$[t_0+4(t'-t_0)/\gamma,t']$ contain only items that are in our
buffer at time $t_0$. Notice that for every color that appears
in these batches, our buffer in the beginning of the phase
contains at least $(1 - 4/\gamma)\cdot(t'-t_0)$ items of this
color. Assuming that $\gamma$ is sufficiently
large, $(1 - 4/\gamma)\cdot(t'-t_0) > (t' - t_0) / 2$. 
If there two different colors, then our buffer at 
time $t_0$ contains one or two colors that together have more
than $t' - t_0$ items, a contradiction to our assumptions.
Thus, all these batches belong to the same color $c$. The number
of items of color $c$ in our buffer is at least 
$(1 - 4/\gamma)\cdot(t'-t_0) > 4(t'-t_0)/\gamma$. 

Recall that by our assumptions so far, we execute in the
current phase Case~4.
If there is a color in our buffer with more items than $c$, 
then after evicting the largest color one of the following
two possibilities happens. If we've reached $t_1$ then we
charge this phase as in Case~2. Otherwise, more than half 
the weight that the fractional solution now removes is on
items of color $c$ that we currently have in the buffer. 
Therefore, we will next execute either Case~1 or Case~2.
We do not charge this phase,
and the next phase is charged with probability at least 
$\frac 1 2$.

If we choose to evict $c$ (because it has the maximum
number of items in the buffer) and we don't reach $t_1$, 
we end up with no items of color $c$ in the buffer, and a
weight of $>\frac 3 4 - 3\delta > \frac 1 2$ is now
being removed by the fractional solution from items of
color $c$. In particular, this means that Case~3 holds, so
in the next phase we definitely will not execute Case~4 again.
(This scenario is precisely the reason for defining Case~3.)
We do not charge this phase. If in the next phase we execute 
Case~1 or Case~2, then the next phase is charged with 
probability at least $\frac 1 2$. Otherwise, in the next
phase we execute Case~3, and as we show below, a Case~3 
phase is either charged or followed by a Case~1 or Case~2 
phase, which is charged with probability at least $\frac 1 2$.

We now analyze the remaining Case~3. Recall that Case~3 is
invoked if the fractional solution removes at time slot $t_0$
a weight of at least $\frac 1 2$ of items of a color that we've
just evicted. Define $t' = t_0 + M$, where $M$ is 
defined as follows. Consider the color $c$ batches that pass
through $t_0$. (Recall that we've just evicted color $c$.)
Each of these batches begins (at time slot $t_0$) with one
or more items that we already evicted from our buffer. Define
$M$ to be the median number of such items in a batch, where
each batch has probability proportional to its fractional weight.
Notice that at time $t'$ at least a weight of $\frac 1 4$ 
of the scheduled batches remove an item that arrived 
at time $t_0$ or later. If $\delta$
is sufficiently small (so that $\frac 1 4 - \delta\ge 2\delta$), 
a weight of at least $2\delta$ of those batches
reaches $t_1$. Thus, if we've reached $t'$ without
removing any items that arrived from $t_0$ onwards,
Case~2 applies.
Moreover, in the interval $[t_0,t']$,
at least $\frac 1 4$ of the scheduled weight is on
items that we've already evicted from our buffer
before time $t_0$. This volume is held in the fractional 
buffer at time $t_0$. So $\Delta\ge (t'-t_0)/4$.
Thus, if either Case~3 holds or the assumptions under 
which we've analyzed Case~4 do not hold, we are left 
with the following situation. There is a time $t'$ such
that $\Delta \ge (t' - t_0)/\gamma$, and if we reach
$t'$ using only items currently in our buffer, then
Case~2 holds. 
By Claim~\ref{cl: case 3 procedure} below, the Case~3 
procedure chooses colors with 
more than $t' - t_0$ items that are in our buffer at time 
$t_0$. Any item that arrives after time $t_0$ that we evict
pushes us one step further beyond $t'$. Therefore,
after evicting all the Case~3 items, either we reach $t_1$
or we can apply Case~2. All but one of the color blocks 
evicted by the Case~3 procedure are charged via 
Claim~\ref{cl: case 3 procedure}.
If we reach $t_1$, the phase is charged as in Case~2.
Otherwise, the phase is not charged, but the next phase
executes either Case~1 or Case~2 and will be charged
with probability at least $\frac 1 2$. 

Concluding the analysis, in expectation at least $\frac 1 6$ 
of the phases are charged. The worst case is when repeatedly
we have a Case~4 phase followed by a Case~3 phase followed
by a Case~2 phase which is charged with probability $\frac 1 2$. 
\end{proof}

\begin{claim}\label{cl: case 3 procedure}
For every $\delta > 0$ there exists $\gamma =\gamma(\delta)$ 
such that applying the Case~3 procedure starting at time $t_0$
chooses color blocks totalling more $t'-t_0$ items in our buffer 
at time $t_0$. Excluding one block, we can charge the eviction
of each block with probability at least $\frac 1 2$ to constant
fractional cost incurred before we complete its eviction. The
same fractional cost is never charged more than once in all
Case~3 procedure invocations.
\end{claim}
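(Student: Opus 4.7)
The plan is to prove the claim in two parts corresponding to its two statements. For the size bound, I would first establish that the total unlocked fractional weight $V := \sum_c w_c |B_c|$ across colors $c$ in our buffer is $\Omega((t'-t_0)/\gamma)$. Since $\Delta \ge (t'-t_0)/\gamma$ equals the total fractional weight removed before $t_0$ from items still in our buffer, and $\Delta$ decomposes as $V$ plus the currently locked weight, I need to control the locked portion. Any locked volume becomes unlocked once the associated item's total removed fractional weight exceeds $\delta$, at which point Case~1 fires and handles that color with its own charge; so the locked portion is bounded by a constant factor of $V$ under the scaling $\gamma = \Theta(1/\delta)$.

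Given this lower bound on $V$, I apply the class partition. For each class $s$, the bound $|B_c| < 2^s$ yields $\sum_{c \in \text{class }s} w_c |B_c| \le W_s \cdot 2^s$. Classes with $W_s < \delta$ are ignored by the procedure, but their total contribution to $V$ is at most $\delta \cdot \sum_s 2^s = O(\delta L)$, where $L$ is the size of the largest block in the buffer; this deficit is absorbed by the mandatory eviction of the largest block itself. For each class with $W_s \ge \delta$, the number of subclasses is at least $W_s/(2\delta)$, and each chosen block has size $\ge 2^{s-1}$, so class $s$ contributes at least $W_s \cdot 2^{s-2}/\delta$ selected items. Summing across classes yields at least $V/(4\delta)$ items, which exceeds $t'-t_0$ when $\gamma = \gamma(\delta)$ is chosen of the form $\Theta(1/\delta)$.

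For the charging part, I would assign to each subclass-chosen block the full unlocked weight of its subclass (which is in $[\delta, 2\delta)$) as its constant fractional charge. These charges are disjoint across all Case~3 invocations: the locking mechanism prevents the same fractional weight from being assigned to two different Case~3 evictions, since a lock only releases when its item's removed fractional weight exceeds $\delta$, at which point Case~1 takes over with a charge to a disjoint portion of the fractional cost. The ``probability $\ge \tfrac 1 2$'' reflects the randomness in selecting a block within a subclass: a specific color $c$ is chosen with probability $w_c/W_{\text{sub}}$, but once chosen, the entire subclass weight is locked to it, so a Markov/expectation argument (paired with the observation that blocks whose $w_c$ is below a constant fraction of $W_{\text{sub}}$ have the bulk of the subclass weight still chargeable to them upon selection) gives the stated probability.

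The main obstacle I anticipate is reconciling the timing requirement---``constant fractional cost incurred \emph{before} we complete the eviction''---with the fact that integral evictions take many output steps and may have new arrivals of the same color appended along the way. Ensuring that the locked weight backing the charge was fully in place before the eviction completes, while simultaneously maintaining disjointness across many Case~3 invocations and with the Case~1 charges, is the most delicate bookkeeping step. Once the lock release condition is shown to synchronize correctly with Case~1 triggering, the rest of the proof follows from the class/subclass construction and routine counting.
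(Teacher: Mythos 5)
Your decomposition into the size bound and the charging argument, and the class/subclass counting that produces $V/(4\delta)$ evicted items, matches the paper's approach. However, you assert that the locked portion of $\Delta$ is a small constant fraction of $V$ without actually proving it. The paper proves this (showing $\Delta_F\ge\frac{1-5\delta}{1-\delta}\Delta$) by observing that any evicted block $B$ holding a live lock must have all its items still in the fractional buffer at weight at least $1-\delta$, contributing at least $(1-\delta)|B|$ to $\Delta$, while the volume it locks is at most $2\delta\cdot 2^s\le 4\delta|B|$; your sketch ("Case~1 fires and handles that color") does not yield this inequality.

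The genuine gap is in your charging argument. You charge each chosen block the entire unlocked subclass weight and claim disjointness because "a lock only releases when \ldots at which point Case~1 takes over." But the lock attached to the chosen block $B$ releases when an item of $B$ drops below weight $1-\delta$ in the \emph{fractional} buffer, and $B$ is already gone from \emph{our} buffer, so this does not trigger Case~1 at all; the released volume is simply free to be re-locked and re-charged in a later Case~3 invocation. What you are missing is the paper's ordering device: within a subclass, sort the colors $c_1,\dots,c_m$ by the time one of their items first drops to weight $1-\delta$, let $c_j$ be the weighted median, and charge only $\sum_{i\le j}w_{c_i}\ge\frac{\delta}{2}$ in the event (probability $\ge\frac 1 2$, since selection is proportional to $w_c$) that the chosen index is $\ge j$. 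Disjointness then follows: by the time the chosen block's lock releases, colors $c_1,\dots,c_j$ have already had items drop below $1-\delta$ \emph{while still in our buffer}, so Case~1 has evicted them, removing their unlocked volume from every future Case~3 accounting. Your "Markov/expectation" sketch neither pins down the $\frac 1 2$ nor replaces this ordering argument, and charging the \emph{full} subclass weight (rather than only the prefix up to the median) would in any case break the disjointness bookkeeping, since colors $c_{j+1},\dots,c_m$ may remain in our buffer and be counted again.
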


\begin{proof}
We relate the charge for chosen colors to the locking of volume that 
the fractional solution removes prior to time $t_0$ of items that we 
hold at time $t_0$. Notice that the total such volume (locked and
unlocked) is precisely $\Delta$. Notice that in every $s$-subclass, 
one execution of the Case~3 procedure locks a volume of at most
$2\delta\cdot 2^s$. While this volume is locked, all the items of
the evicted color block that locked it are kept in the fractional
buffer with weight $> 1 - \delta$. Let $\Delta_F$ denote the 
portion of $\Delta$ that is unlocked. 
We start by showing that $\Delta_F$ is close to $\Delta$. More 
specifically, we show that 
$\Delta_F \geq\frac{1-5\delta}{1-\delta}\cdot\Delta$.
To show this, we consider $\Delta$ as the volume in the fractional
buffer and not in our buffer, and $\Delta_F$ as the unlocked volume 
in our buffer and not in the fractional buffer.

Notice that each of our evictions of a color block
$B$ that is scheduled before $t_0$ contributes to $\Delta$ the total 
weight in the fractional buffer of $B$'s items at time $t_0$. The
sum of these contributions is exactly $\Delta$. If our buffer at time
$t_0$ does not contain any items with volume that was locked for 
the eviction of $B$, then $B$ contributes the same amount to $\Delta$
and $\Delta_F$. Suppose our buffer does contain items
with volume locked by the eviction of $B$. The fractional solution holds 
at time $t_0$ all such items with weight more than $1 - \delta$, 
otherwise we would have applied Case~1. Moreover, since this volume
is still locked, then all the items of $B$ must be held at time $t_0$ by 
the fractional solution with weight at least $1 - \delta$ (otherwise, the
volume assigned to $B$ would become unlocked).
So, if we 
write the contribution of $B$ to $\Delta$ as $(1 - \theta) |B|$, we get 
that $(1 - \theta) |B|\ge (1-\delta) |B|$. The volume that is locked
because of $B$ contributes at most $4\delta |B|$ to 
$\Delta-\Delta_F$, because $|B|$ is at least half the maximum size of
a block in $B$'s subclass. Therefore,
$\frac{\Delta_F}{\Delta}\ge
\min_{\theta\le\delta} \frac {1-\theta-4\delta}{1-\theta}
\geq\frac {1-5\delta}{1-\delta}$.

Going back to the main argument, let $c_1,c_2,\dots,c_m$ be
the colors in some $s$-subclass, sorted by non-decreasing 
order of the time one of their current items first drops to weight 
at most $1-\delta$ in the fractional solution. Let $c_i$ denote
the color we choose from this subclass. Notice that the contribution 
to $\Delta_F$ of this subclass is at most $2\delta\cdot 2^s$,
whereas we evict at least $2^{s-1}$ items from our
time $t_0$ buffer. The total $\Delta_F$ volume unaccounted
for is less than $\sum_{s=1}^{s_{\max}} \delta\cdot 2^s\le
2\delta 2^{s_{\max}}$,
where $s_{\max}$ is the maximum participating value of $s$ 
(this includes classes from which we did not take any color
block). To handle this portion of $\Delta_F$ that is unaccounted
for, recall that we always also evict the largest color block,
whose size is at least $2^{s_{\max}-1}$. Notice that we might
be counting this block twice, once as it might have been chosen
in an $s_{\max}$-subclass, and once as the largest block.
Summarizing the argument, we have that the number of items 
we evict from our time $t_0$ buffer is at least 
$$
\frac 1 2\cdot \frac{1}{4\delta}\cdot\Delta_F \ge 
\frac{1-5\delta}{8\delta(1 - \delta)}\cdot\Delta > 
\gamma\cdot\Delta\ge t' - t_0,
$$
for an appropriate choice of $\gamma$. (The initial $\frac 1 2$
factor is for the double-counting of the largest block.)

Finally, we deal with charging the cost of evicting the colors
we choose. Consider the colors $c_1,c_2,\dots,c_m$ in an
$s$-subclass as defined above. Let $c_j$ denote the median
color in this subclass where colors are weighted by their
contribution to $W_s$. The probability that we choose
a color with index $j$ or larger is at least $\frac 1 2$. If
this happens, we charge the fractional cost of at least
$\sum_{i=1}^j w_{c_i}\ge\frac{\delta}{2}$ that generated 
the volume of colors 
$c_1,\dots,c_j$ that we are now locking. Otherwise, we
don't charge the eviction of a block from this subclass.
Notice that
by the time the block with index $j$ or larger releases the
lock, the blocks for colors $j$ or smaller have been evicted
from our buffer (because of Case~1). Therefore, this cost is 
never charged again in a Case~3 procedure. Also notice that
in expectation half of the Case~3 evictions are charged.
\end{proof}

\bibliographystyle{abbrv}

\newpage
\appendix
\begin{proofof}{Lemma \ref{lm: resource augmentation}}
Given an an input sequance $ֿ{\cal I}$, let $\opt_k$ be an optimal solution to the reordering buffer problem
that uses a buffer of size $k$.
We define an algorithm, $\alg_{k'}$, that uses a buffer of size $k'$ and the optimal solution $\opt_k$. 
In particular, $\alg_{k'}$ is offline. (We abuse notation and denote by 
$\alg_{k'}$ and $\opt_k$ also the cost of these respective solutions.)
Observe that we may assume that after each time $\opt_k$ finishes evicting
a color, this color will not appear in the input sequence again.
(After each eviction, we can rename all the following occurences 
with a new color $c'$ without incurring any additional cost.)
We can therefore denote by color $i$ the $i$'th color that $\opt_k$
evicts.
Consider a time $t> k'$ during the execution of $\alg_{k'}$.
Denote by $f$ the minimum color in $\alg_{k'}$'s buffer.
For any color $c$, denote by $n(c)$ the number of items
of color $c$ in $\alg_{k'}$'s buffer. For any color $c\ge c_f$, 
define the potential $\phi(c)$ of color $c$ by $\phi(c)= (c-c_f+1) n(c)$.
Finally, we define for each color $c$ a counter $p(c)$ initialized
to $0$. Intuitively, this counter counts the number of items 
larger than $c$ that were evicted so far.
Notice that $c_f$, $n(c)$, $\phi(c)$, and $p(c)$ are all a functions of 
$t$. 
The algorithm $\alg_{k'}$ works as follows.\\
For any time $t$.
\begin{enumerate}
\item If the eviction of color $c_f$ will evict 
the last item of this color in $\cal{I}$, evict color $c_f$. 
\item Otherwise, let $c$ be the color with the maximum potential
in $\alg_{k'}$'s buffer. Evict exactly the $n(c)$ items of this color
currently in the buffer (without appending any aditional 
arriving items of the same color).\\
If after the eviction we cannot execute Step~1, we update $p(i)$
 for every $c_f\le i <c$, by increasing $p(i)$ by $n(c)$.
\end{enumerate}

We start by proving a bound on $p(i)$.
\begin{claim}
For any color $i$, at any time during the execution of the algorithm, 
$p(i)<k-k'$.
\end{claim}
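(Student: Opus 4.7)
The plan is to prove the bound by comparing, at the moment $p(i)$ is updated, the state of $\alg_{k'}$ with that of $\opt_k$ at an aligned moment on the input. Items arrive one per time slot and both algorithms consume them, so at every wall-clock moment both have read the same input prefix. Let $B_A$ denote $\alg_{k'}$'s buffer, $B_O$ denote $\opt_k$'s buffer, and $P$ denote the set of items currently counted in $p(i)$ (so $|P|=p(i)$).

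The key observation is that whenever $p(i)$ is updated via a step~2 eviction of color $c > i$ with $c_f \le i$ and step~1 failing afterwards, color $c_f$ must have future items in the input not yet read by either algorithm. Since $\opt_k$ evicts colors in the fixed order $1,2,3,\ldots$ and cannot finish color $c_f$ before reading all of its items, $\opt_k$ has not finished color $c_f$. By choosing an appropriate optimal schedule that starts each color block as late as possible (which preserves optimality), we may assume $\opt_k$ has not even started color $c_f$, so every item of color $\ge c_f$ in the read prefix lies in $B_O$.

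Using this, I will show $B_O \supseteq B_A \cup P$: every item in $B_A$ has color $\ge c_f$ by the minimality of $c_f$, so it lies in $B_O$; every item in $P$ has color $> i \ge c_f$ and was evicted by $\alg_{k'}$, hence is not in $B_A$ but is still in $B_O$. Since the two sets are disjoint and $|B_A| = k'$, we obtain $|B_O| \ge k' + p(i)$. Combined with $|B_O| \le k$ this yields $p(i) \le k-k'$.

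The main obstacle is upgrading this to the strict inequality $p(i) < k-k'$. The delicate step is the bookkeeping of color-$c_f$ items in $B_A$ versus $B_O$: were $\opt_k$ already mid-eviction of $c_f$, some color-$c_f$ items in $B_A$ might no longer be in $B_O$. The late-starting choice of $\opt_k$ defuses this ambiguity, and I expect it also to exhibit at least one additional item of $B_O \setminus (B_A \cup P)$—for instance, a color-$c_f$ item that $\alg_{k'}$ evicted in an earlier step~2 but $\opt_k$ still retains, or an item of color in the range $(c_f, i]$ that $\alg_{k'}$ has evicted (such an item is neither in $B_A$ nor counted in $p(i)$, since $p(i)$ tracks only colors strictly greater than $i$). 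Producing such a witness gives $|B_O| \ge k' + p(i) + 1$, and hence $p(i) \le k-k'-1 < k-k'$, completing the proof.
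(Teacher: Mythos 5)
Your plan diverges substantially from the paper's argument, and it has a genuine gap.

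The paper's proof is a \emph{timing} argument: assume for contradiction that $p(c_f)\ge k-k'$ right after a Step~2 update, count the items $\alg_{k'}$ has output to conclude that the current output slot $t$ satisfies $t \ge k+1+n_1+n_2$, and observe that by this slot $\opt_k$ has already started (and is at least $n_2$ deep into) evicting color $c_f$, so that $\alg_{k'}$ evicting $c_f$ now would run all the way to the last $c_f$ item --- meaning Step~1 would have applied, contradicting the fact that $p$ was just updated.

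Your plan instead tries a set-containment argument $B_O \supseteq B_A\cup P$. Two things break. First, the containment $B_A\subseteq B_O$ is not established. Since $\opt_k$ evicts colors in the fixed renamed order $1,2,\dots$, the only way $B_A\not\subseteq B_O$ can fail is through color $c_f$: $\opt_k$ may be mid-eviction of $c_f$ and may have removed \emph{more} $c_f$-items than $\alg_{k'}$. Quantitatively, $\opt_k$ has removed $m=n_2+q-(k-k')$ items of color $c_f$, where $q$ is the total number of items of colors $>c_f$ that $\alg_{k'}$ has evicted; the containment holds iff $m\le n_2$, i.e.\ iff $q\le k-k'$. But $q\ge p(c_f)$ (some Step~2 evictions of colors $>c_f$ are followed by a successful Step~1 and are not counted in $p$), so ``$q\le k-k'$'' is a \emph{stronger} statement than the one you are trying to prove --- the argument is circular. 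Your proposed fix, choosing an optimal schedule that ``starts each color block as late as possible,'' does not exist: in the renamed model $\opt_k$ outputs colors $1,2,\dots$ back-to-back with no idle slots, so there is no freedom to delay the start of $c_f$; and any different optimal schedule would induce a different renaming, hence a different $c_f$ and a different $\alg_{k'}$, so you cannot swap $\opt_k$ mid-proof. Second, even granting the containment, you only obtain the non-strict $p(i)\le k-k'$, and the upgrade to strict inequality is left as an expectation (``I expect it also to exhibit at least one additional item''), not a proof.

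In fact the inequality $q\le k-k'$ (hence your containment) does hold, but proving it requires exactly the timing comparison the paper makes --- that $t\le k+n_1+n_2$ whenever Step~1 fails --- so your route does not avoid that argument, it presupposes it.
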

\begin{proof}
Notice that it is sufficient to bound $p(c_f)$ in any point in time,
as this is the maximum counter among the colors that their 
$p(i)$ can still increase.
Assume for contradiction that at a given time $t$ the counter 
$p(c_f)$ became larger than $k-k'$ (right after removing a color by Step~2). 
Consider this time $t$. Let $n_1$ be the number of items $\alg_{k'}$ evicted from 
items of a color smaller than $c_f$ (this equals to the number of items with
 a color smaller than $c_f$). Let $n_2$ be the 
number of items $\alg_{k'}$ evicted from color $c_f$.
At time $k+1+n_1$, $\opt_k$ started evicting color $c_f$,
therefore at time $k+1+n_1+n_2$, if the buffer evicted at
most $n_2$ items from color $c_f$, evicting this color will reach 
the last item of $c_f$.
On the other hand, 
because $p(c_f)$ is at least the number of items from colors larger than
$c_f$ that were evicted so far,
it holds that 
$$
t \ge k'+1+n_1+n_2+p(c_f)> k+1+n_1+n_2.
$$
This is in contradiction to Step~2, as the counter is increased only if 
we cannot apply Step~1.
\end{proof}

Next we show a lower bound on the potential.
\begin{claim}
Consider a time $t$ right before executing Step~2.
The maximal potential is $max_c \phi (c)\ge \frac{k'}{1+\ln k'}$.
\end{claim}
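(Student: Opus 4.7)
The plan is a direct contradiction argument that combines the full-buffer condition with a combinatorial lower bound on $\phi$ coming from the fact that color indices are distinct integers. First I would enumerate the colors currently in $\alg_{k'}$'s buffer as $c_1 < c_2 < \cdots < c_m$ with $c_1 = c_f$, and record that $\sum_{i=1}^m n(c_i) = k'$, because by the RBM rules Step~2 is only invoked when the buffer is full; in particular, since each $c_i$ contributes at least one item, $m \le k'$.

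The key observation is that since the $c_i$'s are distinct integers with smallest value $c_f$, we have $c_i \ge c_f + i - 1$, giving the clean lower bound $\phi(c_i) \ge i \cdot n(c_i)$. This is the combinatorial step that makes the rest routine: without it, there is no reason for the potentials to grow with position in the buffer, and the claim could fail.

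Then, assuming for contradiction that $\phi(c_i) < M := k'/(1+\ln k')$ for every $i$, I would deduce $n(c_i) < M/i$, sum over $i$, and bound the resulting partial harmonic sum by $H_m \le 1 + \ln m \le 1 + \ln k'$. This produces $k' = \sum_{i=1}^m n(c_i) < M(1 + \ln k') = k'$, the desired contradiction, so $\max_c \phi(c) \ge M$.

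There is no real obstacle in this proof: the only thing to be careful about is the justification that the buffer is full at the decision moment, which follows from the RBM rule that an eviction is forced only when the buffer fills to capacity $k'$. The crux of the argument is the inequality $c_i - c_f + 1 \ge i$; once this is in hand, the harmonic-sum calculation is immediate.
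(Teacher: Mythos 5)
Your proof is correct and follows essentially the same strategy as the paper's: assume all potentials are below the threshold, bound each $n(c)$ by the threshold divided by a linearly growing term, sum to get a harmonic series, and derive the contradiction $k' < k'$. The only difference is cosmetic bookkeeping: you index by position $i$ in the sorted list of buffer colors and use $c_i - c_f + 1 \ge i$ together with $m \le k'$, whereas the paper indexes directly by color value $c$ over the range $[c_f, c_f + \lfloor s \rfloor]$ (deducing this range from the fact that $\phi(c) \ge c - c_f + 1$ whenever $n(c) \ge 1$), so the harmonic number that appears is $H_{\lfloor s \rfloor}$ rather than your $H_m \le H_{k'}$. Both bounds suffice, and the combinatorial crux—distinctness of integer color indices forcing the potential to scale with position—is identical in the two write-ups.
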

\begin{proof}
Denote $s=\frac{k'}{1+\ln k'}$.
Assume for contradiction that for any color $c$ we have that 
$\phi (c)<s$. Therefore, $c-c_f<s$, and $n(c) < \frac{s}{c-c_f +1}$, 
for every color $c$ in the buffer.
Because there are exactly $k'$ items in the buffer, 
$$
k'=\sum_{c=c_f}^{\lfloor s\rfloor} n(c) < \sum_{c=c_f}^{\lfloor s\rfloor} \frac{s}{c-c_f +1} =
\sum_{i=1}^{\lfloor s\rfloor-c_f+1} \frac {s}{i} = s\cdot H_{\lfloor s\rfloor}\le k'.
$$
Thus, the claim follows.
\end{proof}

We are now ready to prove our lemma.
Notice that the number of times $\alg_{k'}$ executes Step~1
is at most $\opt_k$. Furthermore, notice that 
in every execution of Step~2 , except for $\opt_k$ executions, $\sum_c p(c)$  is increased 
by at least $\frac{k'}{1+\ln k'}$.
Because $\sum_c p(c)<(k-k') \opt_k$, there could be at most 
$\frac{(k-k')(1+\ln k')}{k'} \opt_k$ executions.
Therefore, 
$$
\alg_{k'} \le \left(2+\frac{(k-k')(1+\ln k')}{k'} \right)\cdot\opt_k,
$$
and the lemma then follows.
\end{proofof}

\end{document}